\documentclass[12pt,a4paper]{article}
\usepackage[top=2cm, bottom=2cm, left=2cm, right=2cm]{geometry}
\usepackage{mathptmx} 
\usepackage{microtype} 
\usepackage{adjustbox}
\usepackage{amssymb,amsthm,amsmath,latexsym}
\usepackage{authblk}
\usepackage[symbol]{footmisc}
\usepackage{graphicx}
\usepackage[table]{xcolor} 
\usepackage{colortbl} 
\usepackage{booktabs} 
\usepackage{epstopdf} 
\usepackage[T1]{fontenc}
\usepackage[utf8]{inputenc}
\usepackage{multirow}
\usepackage{pdfpages}
\usepackage{subcaption}
\usepackage{float}
\usepackage{orcidlink}
\usepackage{xurl}
\newtheorem {theorem}{Theorem}

\numberwithin{equation}{section}

\begin{document}
\title{ Unit Shiha Distribution and its Applications to Engineering and Medical Data}
\newcommand{\orcidauthorA}{\orcidlink{0000-0001-9539-9488}}
\author{
\begin{flushleft}
F. A. Shiha$^{}$\orcidauthorA{}\\
{\small
$^{}$Department of Mathematics, Faculty of Science, Mansoura University, 35516 Mansoura, Egypt\\

$^{}$Correspondence: Email: fshiha@mans.edu.eg
}
\end{flushleft}
}
\date{}
\maketitle

\begin{abstract}
There is a growing need for flexible statistical distributions that can accurately model data defined on the unit interval. This paper introduces a new unit distribution, termed the unit Shiha (USh) distribution, which is derived from the original Shiha (Sh) distribution through an inverse exponential transformation. The probability density function of the USh distribution is sufficiently flexible to model both left- and right-skewed data, while its hazard rate function is capable of capturing various failure-rate patterns, including increasing, bathtub-shaped, and J-shaped forms. Several statistical properties of the proposed distribution are investigated, including moments and related measures, the quantile function, entropy, and stress-strength reliability. Parameter estimation is carried out using the maximum likelihood method, and its performance is evaluated through a simulation study. The practical usefulness of the USh distribution is demonstrated using four real-life data sets, and its performance is compared with several well-known competing unit distributions. The comparative results indicate that the proposed model fits the data better than the competitive models applied in this study.

\end{abstract}

\textbf{keywords:} Unit Shiha distribution; Reliability; Maximum likelihood estimation; Data analysis.
\newline
\textbf{Mathematics Subject Classification:} 60E05, 62H12, 62N05, 62H10.

\maketitle

\section {Introduction}
In recent years, there has been increasing interest in developing new probability distributions for practical data analysis. Most existing distributions are defined on the interval $(0, \infty)$, which makes them unsuitable for modeling data that are bounded between 0 and 1. Such data commonly arise in practice, including standardized measurements, proportions, and percentages. Therefore, there is a clear need for simple and flexible probability distributions defined on this unit interval.
One common and effective approach for constructing unit distributions is to apply suitable transformations to positive-valued random variables. Transformations such as  $X = Y/(1 + Y)$ and $X= e^{-Y}$, where $Y$ defined on $(0,\infty)$, are widely used to map the support to the unit interval.
Based on this idea, numerous unit distributions have been proposed in the literature for modeling data on (0,1) across various application areas.
 Notable examples include the unit Xgamma distribution \cite{hashm}, log-XLindley distribution \cite{eliwa}, unit exponential distribution \cite{hasan}, unit Weibull distribution \cite{mazu}, unit exponentiated Weibull distribution \cite{ammar},  unit teissier distribution \cite{krish}, unit exponentiated half-logistic distribution \cite{hassan}, unit Haq distribution \cite{alz}, log-cosine-power unit distribution \cite{log}, unit-exponentiated Lomax distribution \cite{fayomi}, and unit modified Burr-III distribution \cite{ahsan}. A comprehensive review of continuous probability distributions on (0,1) can be found in \cite{rev}.

Shiha \cite{fshiha} introduced the Shiha distribution (ShD) as a two-parameter mixture of  $\text{Exp}(\omega)$, $\text{Exp}(2\omega)$, and $\text{Gamma}(2,2\omega)$ distributions. The ShD exhibits flexible hazard rate patterns, which makes it suitable for modeling a wide range of lifetime data on $ (0, \infty)$.
Motivated by these properties, the main objective of this study is to introduce a unit version of the ShD, referred to as the unit Shiha distribution (UShD), obtained through the transformation $X= e^{-Y}$, where $Y$ follows the Sh distribution.
 The proposed UShD extends the applicability of the original ShD to the unit interval while retaining its flexibility and analytical tractability. It is capable of modeling both left- and right-skewed data and capturing various failure-rate behaviors, including increasing, bathtub-shaped, and J-shaped patterns. These features make the UShD a useful and flexible model for analyzing bounded data arising in reliability, medical, and engineering applications.

 The remainder of this paper is organized as follows. Section 2 introduces the proposed distribution and its hazard rate and stress–strength reliability functions. Section 3 presents the main statistical properties, including moments and related measures, the quantile function, and entropy. Parameter estimation via maximum likelihood and a simulation study are discussed in Sections 4 and 5, respectively. Section 6 illustrates the performance of the proposed model using four real-life data sets and comparisons with several competing distributions. Finally, conclusions are presented in Section 7.

\section{The Unit Shiha Distribution }
The probability density function (PDF) of Shiha distribution \cite{fshiha} is
defined as
\begin{equation}
  f(y) = \frac{\omega}{\omega+3\,\eta}\left[\omega + (2\,\eta + 8\,\omega\,\eta y)e^{-\omega y}\right] e^{-\omega y}, \quad y\geq 0, \, \omega > 0, \, \eta > 0,
\end{equation}
the corresponding cumulative distribution function (CDF) is
\begin{equation}
  F(y) = 1 - \frac{1}{\omega+3\,\eta}\left[\omega + (3\,\eta + 4\,\omega \,\eta y)e^{-\omega y}\right] e^{-\omega y}, \quad y\geq 0, \, \omega > 0, \, \eta > 0.
\end{equation}
Let $X=e^{-Y}$, since $Y\geq 0$, we have $X\in (0,1]$ and $Y=-\ln X$, then the CDF of the transformed (unit) distribution is derived as
\begin{align}\label{tcdf}
  F(x) &= P(X\leq x)=P( e^{-Y} \leq x) \nonumber \\
   & =P(Y\geq -\ln x)=1-P(Y<-\ln x) \nonumber \\
   & =1-F_Y(-\ln x),
\end{align}
with the associated PDF
\begin{equation}\label{tpdf}
  f(x) =f_Y(-\ln x) \, \left|\frac{dy}{dx}\right| = \frac{1}{x} \, f_Y(- \ln x).
\end{equation}
Based on (\ref{tpdf}) and (\ref{tcdf}), the PDF and CDF of the unit Shiha distribution ($\text{USh}(\omega, \eta$)) are given by
\begin{equation}\label{pdf}
f(x;\omega, \eta)=\frac{\omega}{\omega+3\eta}\left[\omega x^{\omega-1}+(2\eta-8\omega\eta\ln x)x^{2\omega-1}\right],\qquad 0<x\le1, ,\ \omega>0,\ \eta>0.
\end{equation}
\begin{equation} \label{cdf}
F(x;\omega, \eta)=\frac{1}{\omega+3\eta}\left[\omega x^{\omega}+(3\eta-4\omega\eta\ln x)x^{2\omega}\right],\qquad 0< x < 1,
\end{equation}
with $F_X(x)=0$ for $x \le 0$ and $F_X(x)=1$ for $x \ge 1$. Note that the $\text{USh}(\omega,\eta)$ distribution reduces to the $\text{Beta}(\omega, 1)$ distribution when $\eta=0$.

The hazard rate function of UShD is
\begin{equation}
h(x) = \frac{\omega x^{\omega-1} \left[ \omega + 2\eta x^{\omega} (1 - 4\omega \ln x) \right]}
{\omega + 3\eta - x^{\omega} \left[ \omega + \eta x^{\omega} (3 - 4\omega \ln x) \right]},
\quad 0 < x < 1.
\end{equation}
Figure \ref{fig:pdfha} illustrates the PDF and the hazard rate function of the UShD for different values of the parameters $\omega$ and $\eta$. It is observed that the PDF of the USh distribution is flexible enough to model both left-skewed and right-skewed data. Moreover, the hazard rate function exhibits a variety of shapes, including bathtub-shaped, increasing, and J-shaped forms. This flexibility indicates that the proposed distribution is capable of modeling a wide range of lifetime data with different failure behaviors, making it suitable for applications in both engineering and medical fields.
\begin{figure}[htbp]
  \centering
 \includegraphics[width=14cm, height=5cm]{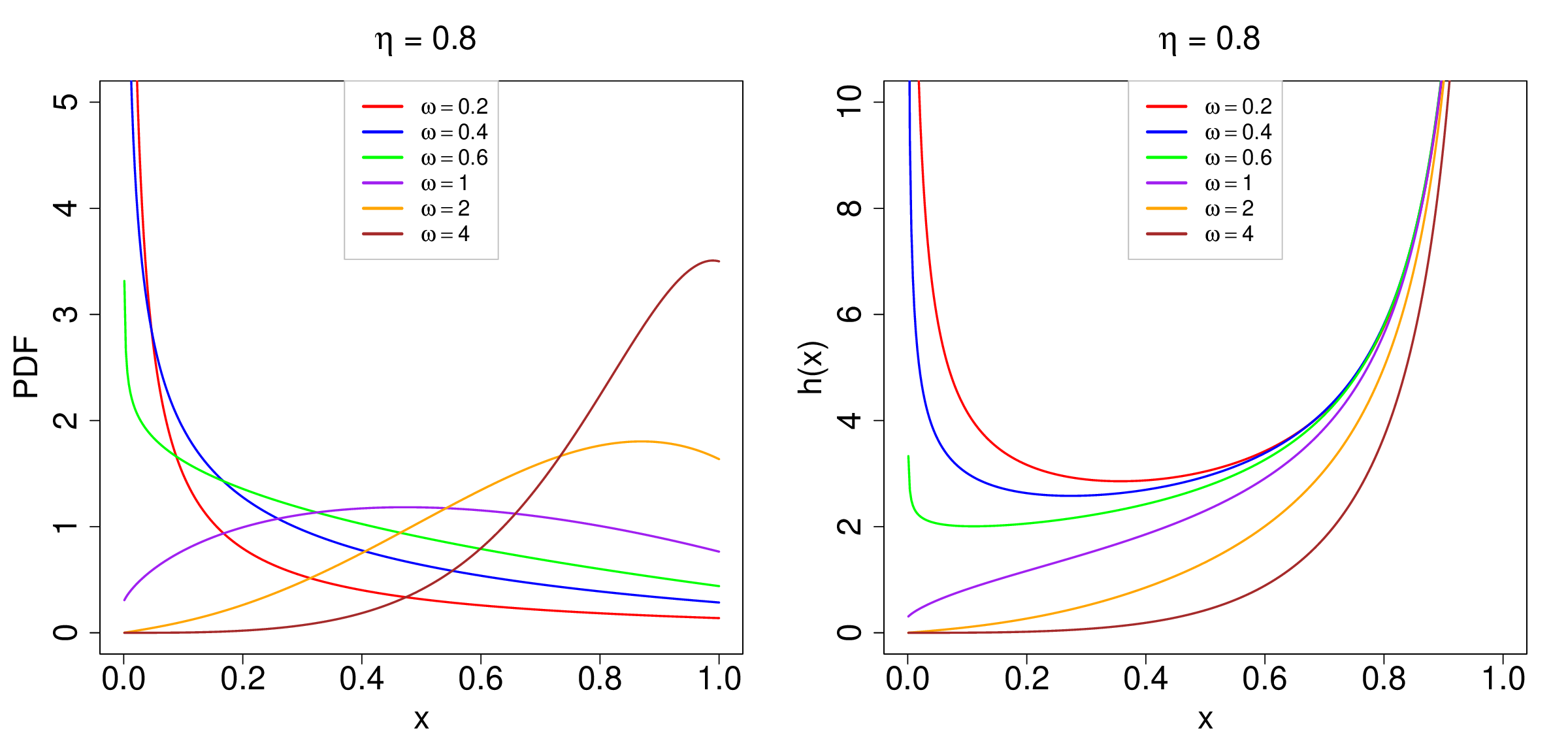}

\caption{The PDF and hazard functions of the UShD at different parameter values.}
 \label{fig:pdfha}
\end{figure}
\subsection{Stress-Strength Reliability}
The stress-strength reliability model is a fundamental concept in reliability engineering and statistical analysis that measures the probability that a system's strength exceeds the applied stress,  with failure occurring when the stress exceeds the strength. Formally, for two independent random variables $X_1$ and $X_2$, where $X_1$ represents the "strength" of a component or system and $X_2$ represents the "stress" imposed upon it, the stress-strength reliability is defined as $R = P(X_2< X_1)$.

Let $X_1$ and $X_2$ be independent USh random variables with parameters $(\omega_1,\eta_1)$ and $(\omega_2,\eta_2)$, respectively.
Then the stress–strength reliability parameter $R$ is given by
\[
R=\int_0^1 f(x; \omega_1,\eta_1)F(x; \omega_2,\eta_2)\,dx.
\]
Substituting from (\ref{pdf}) and (\ref{cdf}), we obtain
\[
\begin{aligned}
R
&=\frac{\omega_1}{(\omega_1+3\eta_1)(\omega_2+3\eta_2)}
\int_0^1
\Big[
\omega_1 x^{\omega_1-1}
+\left(2\eta_1-8\omega_1\eta_1\ln x\right)x^{2\omega_1-1}
\Big]
\\[4pt]
&\hspace{3cm}\times
\Big[
\omega_2 x^{\omega_2}
+\left(3\eta_2-4\omega_2\eta_2\ln x\right)x^{2\omega_2}
\Big]\,dx.
\end{aligned}
\]
Expanding the integrand and integrating term by term yields integrals
of the form
\[
\int_0^1 x^a\,dx=\frac{1}{a+1},\qquad
\int_0^1 x^a\ln x\,dx=-\frac{1}{(a+1)^2},\qquad
\int_0^1 x^a(\ln x)^2\,dx=\frac{2}{(a+1)^3}.
\]

After simplification, the stress--strength reliability is obtained as
\begin{align}\label{uss}
R&=\frac{\omega_1}{(\omega_1+3\eta_1)(\omega_2+3\eta_2)}\Bigg[
\frac{\omega_1\omega_2}{\omega_1+\omega_2}
+\omega_2\eta_1\Bigg(\frac{2}{2\omega_1+\omega_2}+\frac{8\omega_1}{(2\omega_1+\omega_2)^2}\Bigg) \nonumber \\
&\quad+\omega_1\eta_2\Bigg(\frac{3}{\omega_1+2\omega_2}+\frac{4\omega_2}{(\omega_1+2\omega_2)^2}\Bigg)+\eta_1\eta_2\Bigg(
\frac{6}{2\omega_1+2\omega_2}+\frac{24\omega_1+8\omega_2}{(2\omega_1+2\omega_2)^2}+\frac{64\omega_1\omega_2}{(2\omega_1+2\omega_2)^3}
\Bigg)
\Bigg].
\end{align}
It is worth noting that the random variables $X_i \sim USh(\omega_i, \eta_i)$ are derived from the transformation $X_i=e^{-Y_i}$, where $Y_i \sim Sh(\omega_i, \eta_i)$. Since the transformation $x\mapsto e^{-x}$ is strictly
monotone decreasing, it follows that
\[
P(X_2 < X_1)= P(Y_2 > Y_1)=1-P(Y_2 < Y_1).
\]
Consequently, the stress--strength reliability of the UShD is the failure probability of ShD, i.e., $R_{\mathrm{USh}}=1-R_{\mathrm{Sh}}$.
We now investigate some special cases of \eqref{uss}.
\begin{enumerate}
  \item Setting \(\eta_1 = \eta_2 = 0\), we obtain the stress--strength reliability of two independent $\text{Beta}(\omega_1, 1)$ and $\text{Beta}(\omega_2, 1)$ distributed random variables: $R=\frac{\omega_1}{{\omega_1+\omega_2}}$, which also coincides with the failure probability of two independent  $\text{Exp}(\omega_1)$ and $ \text{Exp}(\omega_2)$ random variables.
  \item Setting \(\eta_1 = \eta_2 = \eta\) and \(\omega_1 = \omega_2 = \omega\), i.e., $X_1\sim USh(\omega, \eta)$ and $X_2 \sim USh(\omega, \eta)$, we obtain $R=\frac{1}{2}$.
\item Setting \(\omega_1 = \omega_2 = \omega\) and \(\eta_2 = 0\),  i.e., $X_1 \sim USh(\omega, \eta)$, $X_2 \sim \text{Beta}(\omega, 1)$, we obtain
$ R = \frac{9 \omega+28 \eta}{18(\omega+3\eta)}$.
\end{enumerate}

\section{Statistical Properties}
This section discusses the main statistical properties of the proposed UShD.
\subsection{Moments and Related Measures}
\begin{theorem}
The moment generating function of the unit Shiha distributed random variable is given by
\begin{equation}\label{eq:mgf}
M_X(t)=\frac{\omega}{\omega+3\eta}\sum_{\kappa=0}^{\infty}\frac{t^{\kappa}}{{\kappa}!}\left[\frac{\omega}{{\kappa}+\omega}
+\frac{2\eta}{{\kappa}+2\omega}+\frac{8\omega\eta}{({\kappa}+2\omega)^2}\right].
\end{equation}
\end{theorem}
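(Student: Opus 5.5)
We compute $M_X(t)=E(e^{tX})$ by expanding the exponential in its Taylor series and exchanging sum and integral, so that everything reduces to the raw moments $E(X^{\kappa})$. These moments are elementary integrals of the type already used in the stress--strength computation.

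\textbf{Step 1: raw moments.} For $\kappa\ge 0$, using the PDF (\ref{pdf}),
\[
E(X^{\kappa})=\frac{\omega}{\omega+3\eta}\int_0^1\Big[\omega x^{\kappa+\omega-1}+2\eta x^{\kappa+2\omega-1}-8\omega\eta\, x^{\kappa+2\omega-1}\ln x\Big]dx .
\]
We then apply the identities $\int_0^1x^a\,dx=1/(a+1)$ and $\int_0^1x^a\ln x\,dx=-1/(a+1)^2$, valid for $a>-1$. Here $a=\kappa+\omega-1$ or $a=\kappa+2\omega-1$, and both exceed $-1$ because $\omega>0$. The three terms give $\omega/(\kappa+\omega)$, $2\eta/(\kappa+2\omega)$ and $+8\omega\eta/(\kappa+2\omega)^2$, the sign in the last term flipping because of the minus in front of the logarithm. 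Hence
\[
E(X^{\kappa})=\frac{\omega}{\omega+3\eta}\left[\frac{\omega}{\kappa+\omega}+\frac{2\eta}{\kappa+2\omega}+\frac{8\omega\eta}{(\kappa+2\omega)^2}\right].
\]
As a check, $\kappa=0$ gives $\frac{\omega}{\omega+3\eta}\big(1+\frac{\eta}{\omega}+\frac{2\eta}{\omega}\big)=1$.

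\textbf{Step 2: series for the MGF.} Write $e^{tx}=\sum_{\kappa\ge0}t^\kappa x^\kappa/\kappa!$. The interchange $\int_0^1\sum_\kappa=\sum_\kappa\int_0^1$ is the only point needing justification, and it is routine. Since $0<x\le1$, we have $\sum_\kappa |t|^\kappa x^\kappa f(x)/\kappa!\le e^{|t|}f(x)$, which is integrable because $f$ is a density. By Fubini--Tonelli (or dominated convergence), the interchange is valid for every real $t$. Therefore $M_X(t)=\sum_\kappa \frac{t^\kappa}{\kappa!}E(X^\kappa)$. Substituting Step 1 and pulling out the common factor $\omega/(\omega+3\eta)$ gives (\ref{eq:mgf}). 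The series converges for all $t$ because the bracket is bounded in $\kappa$.

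The main obstacle is minor. It is keeping track of the sign of the logarithmic term and ensuring the exponents satisfy $a>-1$, in particular near $x=0$ when $\omega<1$. Both points are settled by the conditions $\omega>0$ and $\kappa\ge0$.
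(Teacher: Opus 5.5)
Your proposal is correct and follows the paper's proof exactly: you expand $e^{tX}$ as a Taylor series, interchange sum and expectation, and evaluate $E[X^{\kappa}]$ term by term using $\int_0^1 x^a\,dx$ and $\int_0^1 x^a\ln x\,dx$. Two points you add beyond the paper are worth keeping: the bound $e^{|t|x}f(x)\le e^{|t|}f(x)$, which justifies the interchange the paper only asserts, and the $\kappa=0$ normalization check.
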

\begin{proof} By definition, $M_X(t)=E[e^{tX}]$. Using the Taylor expansion $e^{tX}=\sum_{\kappa=0}^\infty t^{\kappa} X^{\kappa}/\kappa!$ and interchanging sum and expectation,
\begin{equation}
M_X(t)= \sum_{{\kappa}=0}^{\infty}\frac{t^{\kappa}}{{\kappa}!}\,E[X^{\kappa}],
\end{equation}
and the $\kappa$-th moment is given by
\begin{align}\label{eq:moments}
\mu'_{\kappa} =E[X^{\kappa}] &=\int_0^1 x^{\kappa} f(x;\omega,\eta)\,dx \nonumber \\
&= \frac{\omega}{\omega+3\eta}\Big[\omega\!\int_0^1 x^{{\kappa}+\omega-1}dx + 2\eta\!\int_0^1 x^{{\kappa}+2\omega-1}dx - 8\omega\eta\!\int_0^1 (\ln x) x^{{\kappa}+2\omega-1}dx\Big] \nonumber \\
&=\frac{\omega}{\omega+3\eta}\Big[\frac{\omega}{{\kappa}+\omega}+\frac{2\eta}{{\kappa}+2\omega}+ \frac{8\omega\eta}{({\kappa}+2\omega)^2}\Big].
\end{align}
\end{proof}
The first four moments of the UShD can be obtained by setting $\kappa = 1,2,3$, and $4$ in (\ref{eq:moments}),
\begin{align}
\mu'_1 =E[X] &= \frac{\omega}{\omega+3\eta}\left[\frac{\omega}{1+\omega}+\frac{2\eta}{1+2\omega}+\frac{8\omega\eta}{(1+2\omega)^2}\right], \label{eq:mean}\\
\mu'_2 =E[X^2] &= \frac{\omega}{\omega+3\eta}\left[\frac{\omega}{2+\omega}+\frac{\eta}{1+\omega}+\frac{2\omega\eta}{(1+\omega)^2}\right]. \label{eq:second}\\
\mu'_3 =E[X^3]& =\frac{\omega}{\omega+3\eta}\Big[\frac{\omega}{3+\omega}+\frac{2\eta}{3+2\omega}+ \frac{8\omega\eta}{(3+2\omega)^2}\Big] \\
\mu'_4 =E[X^4] &= \frac{\omega}{\omega+3\eta}\Big[\frac{\omega}{4+\omega}+\frac{\eta}{2+\omega}+ \frac{2\omega\eta}{(2+\omega)^2}\Big].
\end{align}
 The numerical values of the first four moments of UShD for several values of $\eta$ and $\omega$ are presented in Table~\ref{T:mom}. It can be observed that the first four moments increase as $\omega$ increases and decrease as $\eta$ increases.
\begin{table}[!h]
\centering
\caption{The first four moments of the UShD for different values of the model parameters.}\label{T:mom}
\adjustbox{max width=\textwidth, center}{\footnotesize
\begin{tabular}{c|cccc|cccc|cccc} \hline
\multirow{2}{*}{$ \omega$}
& \multicolumn{4}{c|}{$\eta=0.1$}
& \multicolumn{4}{c|}{$\eta=0.8$}
& \multicolumn{4}{c}{$\eta=1.5$} \\ \cline{2-13}
 & $\mu'_1$ & $\mu'_2$ & $\mu'_3$ & $\mu'_4$ & $\mu'_1$ & $\mu'_2$ & $\mu'_3$ & $\mu'_4$ & $\mu'_1$ & $\mu'_2$ & $\mu'_3$ & $\mu'_4$ \\ \hline
0.1 & 0.078 & 0.039 & 0.026 & 0.019 & 0.075 & 0.036 & 0.024 & 0.018 & 0.074 & 0.036 & 0.024 & 0.018 \\
0.2 & 0.156 & 0.081 & 0.054 & 0.041 & 0.151 & 0.075 & 0.05 & 0.037 & 0.15 & 0.075 & 0.049 & 0.036 \\
0.3 & 0.225 & 0.121 & 0.082 & 0.062 & 0.22 & 0.114 & 0.076 & 0.056 & 0.22 & 0.114 & 0.075 & 0.056 \\
0.4 & 0.283 & 0.159 & 0.11 & 0.084 & 0.281 & 0.152 & 0.102 & 0.076 & 0.28 & 0.151 & 0.101 & 0.075 \\
0.5 & 0.333 & 0.194 & 0.136 & 0.104 & 0.333 & 0.188 & 0.128 & 0.096 & 0.333 & 0.187 & 0.127 & 0.095 \\
0.6 & 0.377 & 0.227 & 0.161 & 0.124 & 0.379 & 0.221 & 0.153 & 0.116 & 0.38 & 0.22 & 0.152 & 0.115 \\
0.7 & 0.415 & 0.257 & 0.184 & 0.144 & 0.419 & 0.252 & 0.177 & 0.135 & 0.42 & 0.251 & 0.176 & 0.134 \\
0.8 & 0.448 & 0.284 & 0.207 & 0.162 & 0.454 & 0.281 & 0.2 & 0.154 & 0.456 & 0.281 & 0.199 & 0.152 \\
0.9 & 0.478 & 0.31 & 0.228 & 0.18 & 0.485 & 0.308 & 0.222 & 0.172 & 0.487 & 0.308 & 0.221 & 0.17 \\
1 & 0.504 & 0.333 & 0.248 & 0.197 & 0.513 & 0.333 & 0.243 & 0.19 & 0.515 & 0.333 & 0.242 & 0.188 \\
1.1 & 0.528 & 0.355 & 0.267 & 0.213 & 0.538 & 0.357 & 0.263 & 0.206 & 0.54 & 0.357 & 0.262 & 0.205 \\
1.2 & 0.55 & 0.376 & 0.285 & 0.228 & 0.56 & 0.378 & 0.282 & 0.223 & 0.563 & 0.379 & 0.281 & 0.221 \\
1.3 & 0.57 & 0.395 & 0.302 & 0.243 & 0.58 & 0.399 & 0.3 & 0.238 & 0.583 & 0.4 & 0.299 & 0.237 \\
1.4 & 0.588 & 0.413 & 0.318 & 0.258 & 0.599 & 0.418 & 0.317 & 0.254 & 0.602 & 0.419 & 0.317 & 0.252 \\
1.5 & 0.604 & 0.43 & 0.333 & 0.271 & 0.615 & 0.436 & 0.333 & 0.268 & 0.619 & 0.437 & 0.333 & 0.267 \\
1.6 & 0.619 & 0.447 & 0.348 & 0.285 & 0.631 & 0.452 & 0.349 & 0.282 & 0.634 & 0.454 & 0.349 & 0.281 \\
1.7 & 0.634 & 0.462 & 0.362 & 0.298 & 0.645 & 0.468 & 0.364 & 0.296 & 0.649 & 0.47 & 0.364 & 0.295 \\
1.8 & 0.647 & 0.476 & 0.376 & 0.31 & 0.658 & 0.483 & 0.378 & 0.309 & 0.662 & 0.485 & 0.379 & 0.308 \\
1.9 & 0.659 & 0.49 & 0.389 & 0.322 & 0.67 & 0.497 & 0.391 & 0.321 & 0.674 & 0.499 & 0.392 & 0.321 \\
2 & 0.67 & 0.502 & 0.401 & 0.333 & 0.681 & 0.51 & 0.404 & 0.333 & 0.685 & 0.513 & 0.406 & 0.333 \\
2.1 & 0.681 & 0.515 & 0.413 & 0.344 & 0.692 & 0.523 & 0.417 & 0.345 & 0.696 & 0.526 & 0.418 & 0.345 \\
2.2 & 0.691 & 0.526 & 0.424 & 0.355 & 0.701 & 0.534 & 0.429 & 0.356 & 0.705 & 0.538 & 0.43 & 0.357 \\
2.3 & 0.7 & 0.537 & 0.435 & 0.366 & 0.711 & 0.546 & 0.44 & 0.367 & 0.715 & 0.549 & 0.442 & 0.368 \\
2.4 & 0.709 & 0.548 & 0.446 & 0.376 & 0.719 & 0.556 & 0.451 & 0.378 & 0.723 & 0.56 & 0.453 & 0.378 \\
2.5 & 0.717 & 0.558 & 0.456 & 0.385 & 0.727 & 0.567 & 0.462 & 0.388 & 0.731 & 0.57 & 0.464 & 0.389 \\
2.6 & 0.725 & 0.568 & 0.466 & 0.395 & 0.735 & 0.576 & 0.472 & 0.398 & 0.739 & 0.58 & 0.474 & 0.399 \\
2.7 & 0.732 & 0.577 & 0.475 & 0.404 & 0.742 & 0.586 & 0.481 & 0.407 & 0.746 & 0.589 & 0.484 & 0.408 \\
2.8 & 0.739 & 0.586 & 0.484 & 0.413 & 0.749 & 0.595 & 0.491 & 0.416 & 0.753 & 0.598 & 0.493 & 0.418 \\
2.9 & 0.746 & 0.594 & 0.493 & 0.421 & 0.755 & 0.603 & 0.5 & 0.425 & 0.759 & 0.607 & 0.502 & 0.427 \\
3 & 0.752 & 0.602 & 0.502 & 0.43 & 0.761 & 0.611 & 0.508 & 0.434 & 0.765 & 0.615 & 0.511 & 0.435 \\
\hline
\end{tabular}}
\end{table}
Using the first four moments, coefficient of skewness (SK) and coefficient of kurtosis (KU) are computed as:
\begin{equation*}
\text{SK}=\frac{\mu'_3 - 3\mu'_2 \, \mu'_1 + 2(\mu'_1)^3}{{(\text{Var}(Y))}^{3/2}},\qquad \qquad
\text{KU}=\frac{\mu'_4 - 4\mu'_3 \, \mu'_1 + 6\mu'_2 \, (\mu'_1)^2 - 3{\mu'_1}^4}{{(\mathrm{Var}(Y))}^{2}}.
\end{equation*}
Table~\ref{T:mom2} reports the variance, skewness, and kurtosis values of the UShD for various combinations of $\omega$ and $\eta$.
From Table~\ref{T:mom2}, it is observed that
\begin{enumerate}
  \item For fixed $\omega$, the variance decreases as $\eta$ increases, while skewness increases if $\omega \in (0,1)$ .
  \item For fixed $\eta$,
  \begin{itemize}
    \item the variance initially increases for small $\omega$ values and then decreases.
    \item the skewness decreases as $\omega$ increases, being positive for $\omega \in (0,1)$ and negative for $\omega > 1$,
    \item the kurtosis decreases when $0< \omega < 1$ and then increases for $\omega \geq 1$.
  \end{itemize}
  Figure~\ref{fig:varsk} further illustrates the effects of the parameters. These results indicate that the parameter $\omega$ primarily governs the direction of skewness of the distribution, producing right-skewed distributions for $\omega < 1$, left-skewed distributions for $\omega > 1$, and near to be symmetric at $\omega=1$. On the other hand, the parameter $\eta$ controls the level of dispersion and tail behavior, where larger values of $\eta$ lead to more concentrated distributions with heavier tails. this structural flexibility makes the UShD well suited for modeling different types of real-life data with varying degrees of asymmetry and tail heaviness.
\end{enumerate}
\begin{table}[!ht]
\centering
\caption{The Variance, skewness, and kurtosis of the UShD for different values of $\omega$ and $\eta$.} \label{T:mom2}
\adjustbox{max width=\textwidth, center}{
\begin{tabular}{c|ccc|ccc|ccc|ccc}
\hline
\multirow{2}{*}{$\omega$}
& \multicolumn{3}{c|}{$\eta=0.1$}
& \multicolumn{3}{c|}{$\eta=0.8$}
& \multicolumn{3}{c|}{$\eta=1.5$}
& \multicolumn{3}{c}{$\eta=3$} \\ \cline{2-13}
& $Var(Y)$ & SK & KU
& $Var(Y)$ &SK & KU
& $Var(Y)$ & SK & KU
& $Var(Y)$ &SK & KU \\ \hline
0.1 & 0.033 & 2.968 & 11.66 & 0.031 & 3.066 & 12.389 & 0.031 & 3.075 & 12.456 & 0.03 & 3.08 & 12.495 \\
0.2 & 0.056 & 1.78 & 5.285 & 0.053 & 1.847 & 5.63 & 0.052 & 1.854 & 5.668 & 0.052 & 1.859 & 5.691 \\
0.3 & 0.071 & 1.235 & 3.436 & 0.066 & 1.281 & 3.638 & 0.065 & 1.287 & 3.663 & 0.065 & 1.29 & 3.68 \\
0.4 & 0.079 & 0.897 & 2.636 & 0.073 & 0.929 & 2.775 & 0.072 & 0.934 & 2.796 & 0.072 & 0.937 & 2.809 \\
0.5 & 0.083 & 0.656 & 2.23 & 0.077 & 0.679 & 2.342 & 0.076 & 0.683 & 2.36 & 0.075 & 0.686 & 2.372 \\
0.6 & 0.085 & 0.471 & 2.014 & 0.077 & 0.488 & 2.112 & 0.076 & 0.492 & 2.129 & 0.075 & 0.495 & 2.141 \\
0.7 & 0.085 & 0.322 & 1.9 & 0.077 & 0.334 & 1.993 & 0.075 & 0.338 & 2.011 & 0.074 & 0.341 & 2.023 \\
0.8 & 0.083 & 0.196 & 1.848 & 0.075 & 0.206 & 1.94 & 0.073 & 0.21 & 1.958 & 0.072 & 0.213 & 1.971 \\
0.9 & 0.081 & 0.089 & 1.836 & 0.073 & 0.096 & 1.928 & 0.071 & 0.1 & 1.947 & 0.069 & 0.105 & 1.96 \\
1 & 0.079 & -0.005 & 1.849 & 0.07 & 0 & 1.944 & 0.068 & 0.005 & 1.964 & 0.066 & 0.01 & 1.976 \\
1.1 & 0.076 & -0.088 & 1.881 & 0.067 & -0.085 & 1.98 & 0.065 & -0.079 & 1.999 & 0.063 & -0.073 & 2.011 \\
1.2 & 0.074 & -0.163 & 1.925 & 0.065 & -0.162 & 2.028 & 0.062 & -0.155 & 2.048 & 0.06 & -0.147 & 2.059 \\
1.3 & 0.071 & -0.231 & 1.979 & 0.062 & -0.231 & 2.086 & 0.06 & -0.223 & 2.105 & 0.058 & -0.214 & 2.116 \\
1.4 & 0.068 & -0.292 & 2.038 & 0.059 & -0.293 & 2.15 & 0.057 & -0.285 & 2.17 & 0.055 & -0.275 & 2.179 \\
1.5 & 0.065 & -0.349 & 2.103 & 0.057 & -0.351 & 2.219 & 0.054 & -0.343 & 2.239 & 0.052 & -0.331 & 2.247 \\
1.6 & 0.063 & -0.401 & 2.17 & 0.055 & -0.405 & 2.292 & 0.052 & -0.396 & 2.312 & 0.05 & -0.383 & 2.317 \\
1.7 & 0.06 & -0.449 & 2.24 & 0.052 & -0.454 & 2.366 & 0.05 & -0.445 & 2.387 & 0.047 & -0.431 & 2.391 \\
1.8 & 0.058 & -0.494 & 2.311 & 0.05 & -0.501 & 2.443 & 0.047 & -0.491 & 2.464 & 0.045 & -0.475 & 2.465 \\
1.9 & 0.056 & -0.537 & 2.383 & 0.048 & -0.544 & 2.52 & 0.045 & -0.534 & 2.542 & 0.043 & -0.517 & 2.541 \\
2 & 0.053 & -0.576 & 2.456 & 0.046 & -0.584 & 2.597 & 0.043 & -0.574 & 2.62 & 0.041 & -0.556 & 2.617 \\
2.1 & 0.051 & -0.613 & 2.528 & 0.044 & -0.623 & 2.675 & 0.042 & -0.612 & 2.698 & 0.039 & -0.594 & 2.694 \\
2.2 & 0.049 & -0.648 & 2.6 & 0.042 & -0.659 & 2.752 & 0.04 & -0.648 & 2.776 & 0.038 & -0.629 & 2.77 \\
2.3 & 0.047 & -0.681 & 2.672 & 0.041 & -0.693 & 2.829 & 0.038 & -0.682 & 2.854 & 0.036 & -0.662 & 2.846 \\
2.4 & 0.045 & -0.712 & 2.744 & 0.039 & -0.725 & 2.905 & 0.037 & -0.714 & 2.932 & 0.035 & -0.693 & 2.922 \\
2.5 & 0.044 & -0.742 & 2.814 & 0.038 & -0.756 & 2.98 & 0.035 & -0.745 & 3.008 & 0.033 & -0.723 & 2.998 \\
2.6 & 0.042 & -0.77 & 2.884 & 0.036 & -0.785 & 3.054 & 0.034 & -0.774 & 3.084 & 0.032 & -0.752 & 3.072 \\
2.7 & 0.041 & -0.797 & 2.953 & 0.035 & -0.812 & 3.127 & 0.033 & -0.802 & 3.159 & 0.031 & -0.779 & 3.146 \\
2.8 & 0.039 & -0.823 & 3.021 & 0.034 & -0.839 & 3.199 & 0.032 & -0.829 & 3.233 & 0.029 & -0.805 & 3.219 \\
2.9 & 0.038 & -0.847 & 3.088 & 0.033 & -0.864 & 3.27 & 0.03 & -0.854 & 3.306 & 0.028 & -0.83 & 3.292 \\
3 & 0.036 & -0.871 & 3.154 & 0.031 & -0.888 & 3.34 & 0.029 & -0.879 & 3.378 & 0.027 & -0.854 & 3.363 \\
 \hline
\end{tabular}}
\end{table}
\begin{figure}[H]
  \centering
 \includegraphics[width=15cm, height=3.9cm]{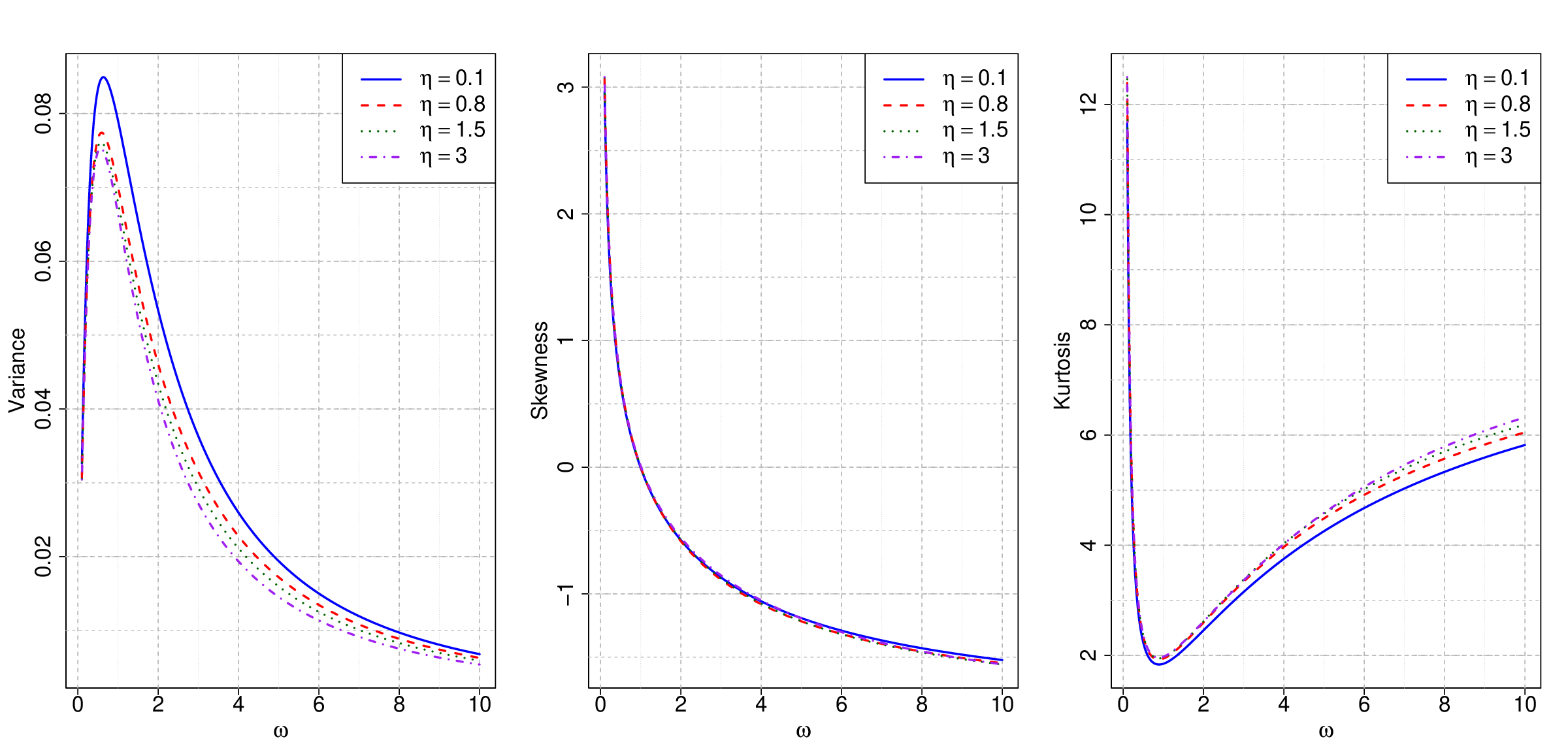}
 \includegraphics[width=15cm, height=6cm]{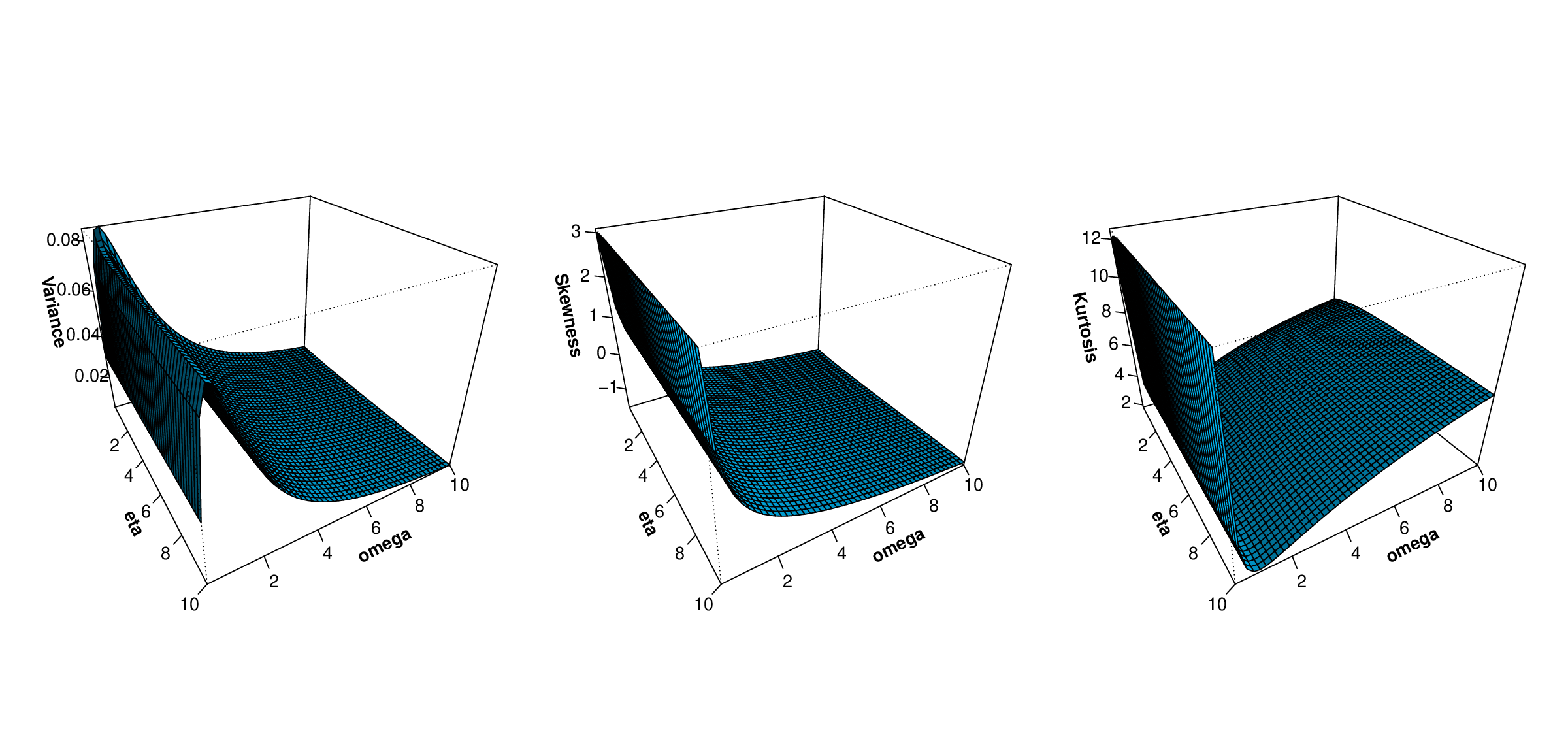}
\caption{The variance, skewness, and kurtosis of the UShD for $0< \omega, \eta \leq 10$.}
 \label{fig:varsk}
\end{figure}

\subsection{Quantile Function}
The quantile function $x_p = F^{-1}(p)$, for $p \in (0,1)$, of the UShD is the solution of
\begin{equation} \label{eq:qu1}
 \omega x_p^{\omega}+(3\eta-4\omega\eta\ln x_p)x_p^{2\omega}=p(\omega+3\eta), \quad p \in (0,1)
\end{equation}

Equation~(\ref{eq:qu1}) does not admit a closed-form solution for $x_p$. Therefore, the quantiles are computed numerically usingthe \texttt{uniroot}
function in the R statistical software.
Table~\ref{tab:quantiles} reports selected quantile values for different combinations of $\omega$ and $\eta$. It can be observed that, for a fixed value of $\eta$, the quantiles increase as $\omega$ increases. For a fixed $\omega$, increasing $\eta$ leads to higher quantile values for probabilities $p \in (0, 0.57]$, whereas for $p \in (0.57, 1)$ the quantiles decrease slightly as $\eta$ increases.

\begin{table}[h!]
\centering \small{
\caption{Quantile values of UShD for different parameter combinations.}
\label{tab:quantiles}
\renewcommand{\arraystretch}{1.3}

\begin{tabular}{|c|c|c|c|c|c|c|}
\hline
\textbf{p} &
\begin{tabular}{c}
$\omega=0.5$ \\
$\eta=0.4$
\end{tabular} &
\begin{tabular}{c}
$\omega=0.5$ \\
$\eta=1$
\end{tabular} &
\begin{tabular}{c}
$\omega=1$ \\
$\eta=0.4$
\end{tabular} &
\begin{tabular}{c}
$\omega=1$ \\
$\eta=1$
\end{tabular} &
\begin{tabular}{c}
$\omega=1.5$ \\
$\eta=0.4$
\end{tabular} &
\begin{tabular}{c}
$\omega=1.5$ \\
$\eta=1$
\end{tabular} \\
\hline

0.01 & 0.0006 & 0.0011 & 0.0192 & 0.0272 & 0.0650 & 0.0818 \\
0.05 & 0.0080 & 0.0102 & 0.0779 & 0.0926 & 0.1723 & 0.1954 \\
0.10 & 0.0224 & 0.0262 & 0.1374 & 0.1533 & 0.2564 & 0.2782 \\
0.25 & 0.0880 & 0.0933 & 0.2869 & 0.2992 & 0.4284 & 0.4424 \\
0.40 & 0.1821 & 0.1861 & 0.4214 & 0.4281 & 0.5589 & 0.5657 \\
0.50 & 0.2615 & 0.2635 & 0.5091 & 0.5119 & 0.6363 & 0.6390 \\
0.55 & 0.3067 & 0.3075 & 0.5531 & 0.5540 & 0.6734 & 0.6743 \\
0.57 & 0.3260 & 0.3262 & 0.5708 & 0.5710 & 0.6880 & 0.6882\\
0.58 & 0.3359 & 0.3358 & 0.5797 & 0.5795 & 0.6952 & 0.6952 \\
0.60 & 0.3562 & 0.3555 & 0.5975 & 0.5967 & 0.7097 & 0.7090 \\
0.75 & 0.5347 & 0.5292 & 0.7353 & 0.7301 & 0.8166 & 0.8124 \\
0.90 & 0.7770 & 0.7693 & 0.8860 & 0.8802 & 0.9244 & 0.9201 \\
0.95 & 0.8800 & 0.8741 & 0.9411 & 0.9372 & 0.9616 & 0.9588 \\
0.99 & 0.9743 & 0.9726 & 0.9879 & 0.9868 & 0.9922 & 0.9915 \\
\hline
\end{tabular}}
\end{table}

\begin{figure}[h]
  \centering
 \includegraphics[width=14cm, height=5cm]{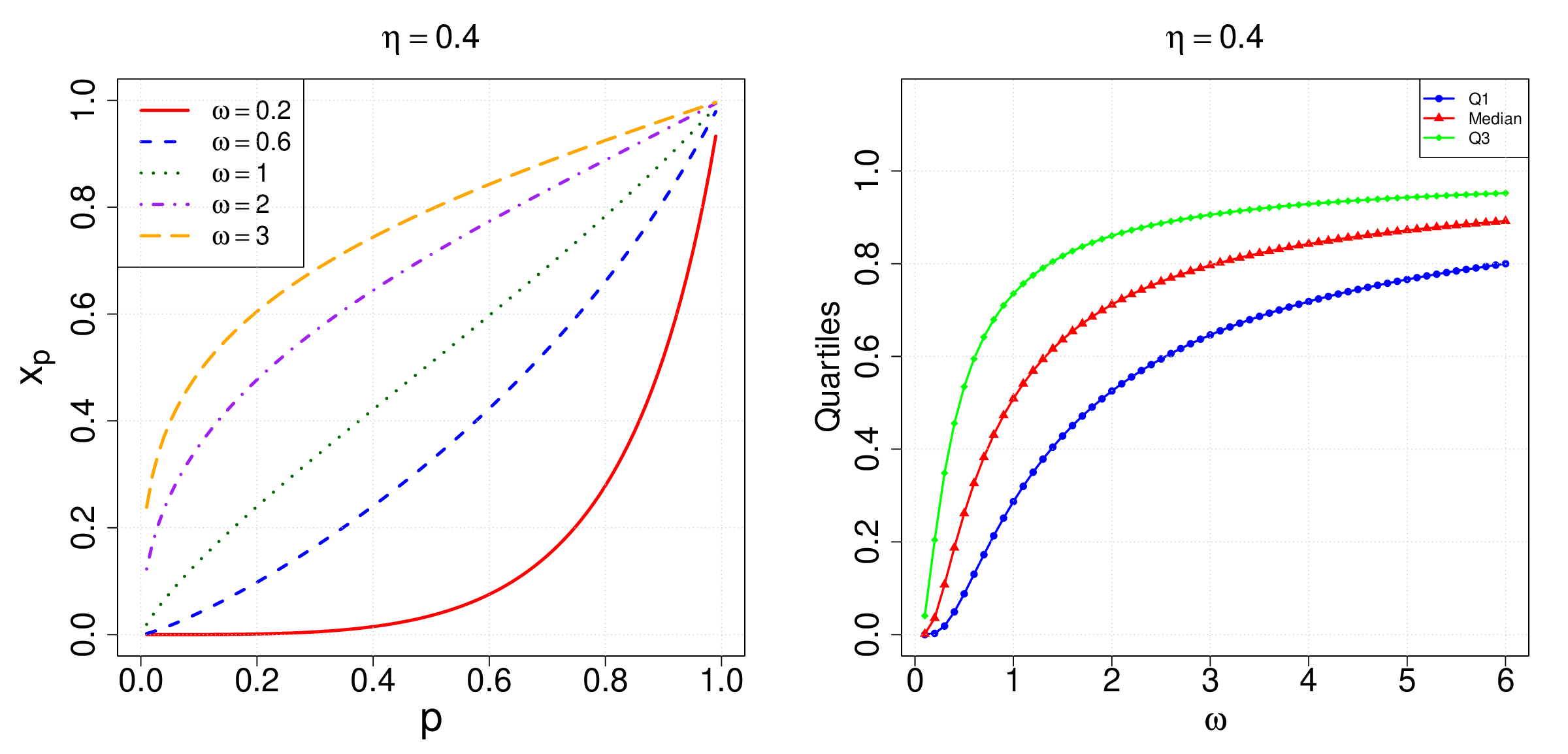}

  \caption{Left: quantile functions of UShD for $\eta = 0.4$ and various $\omega$ values.
  Right: quartiles of UShD as functions of $\omega$ for $\eta = 0.4$.}
 \label{fig:qua}
\end{figure}

Figure~\ref{fig:qua} illustrates the effect of the shape parameter $\omega$ on the quantiles of the UShD for a fixed value of $\eta = 0.4$.
 The left panel shows the entire quantile functions across probability levels, while the right panel displays the variation of the first quartile, median, and third quartile as functions of $\omega$, all quartiles increase monotonically with increasing values of $\omega$.
\subsection{Entropy}
Entropy is a fundamental measure of uncertainty or information associated with a random variable.
For continuous distributions, the differential entropy quantifies the unpredictability of the outcomes and is widely used in statistics, information theory, and engineering for model comparison, risk assessment, and system analysis.
For a random variable $X\sim \text{USh}(\omega,\eta)$, the differential entropy is defined as
\[
H(x) = -\int_{0}^{1} f(x;\omega,\eta)\,\ln\!\bigl(f(x;\omega,\eta)\bigr)\,dx= -\text{E}\bigl[\ln f(X;\omega,\eta)\bigr].
\]
Substituting the PDF from (\ref{pdf}), we obtain
\begin{align}
H(x) &= -\text{E}\left[\ln\!\left(\frac{\omega}{\omega+3\eta}\left(\omega X^{\omega-1}+\left(2\eta-8\omega\eta\ln X\right)X^{2\omega-1}\right)\right)
\right] \nonumber \\
&= -\ln\!\left(\frac{\omega}{\omega+3\eta}\right)-\text{E}\left[\ln\!\left(\omega X^{\omega-1}+(2\eta-8\omega\eta\ln X)X^{2\omega-1}
\right)\right]. \label{eq:ent}
\end{align}
The expected value in (\ref{eq:ent})  cannot be derived analytically, but it can be evaluated numerically for specific parameter values, as illustrated in Table \ref{T:entropy} and Figure~\ref{fig:entropy}. It can be observed that $H(X)$ is negative and increases with $\omega$ for $0 < \omega < 1$, and then decreases slowly for $\omega \geq 1$.

\begin{table}[h!]
\centering
\caption{Entropy of UShD for selected parameters values.}
\label{T:entropy}
\adjustbox{max width=\textwidth, center}{
\begin{tabular}{c|ccccccccccc}
\hline
$\omega \backslash \eta$ & 0.2 & 0.5 & 0.7 & 0.9 &1 &1.1 &1.5 &2 &3 &4 \\
\hline
   0.1  & 	-3.4946 &   	-3.5412  &  	-3.5511&	-3.5568 &	-3.5588   &	-3.5605   &	-3.5649&	-3.5680 &	-3.5712 &	-3.5728 \\
   0.2  &	-1.7715   &	-1.7418 &   	-1.7350&	-1.7311 &	-1.7297  &  	-1.7286   &	-1.7254&	-1.7232 &	-1.7210 &	-1.7198\\
  0.3 &   	-0.8691  &  	-0.8240 &   	-0.8132&	-0.8069 &	-0.8046 &   	-0.8027 &   	-0.7975&	-0.7938 &	-0.7901 &	-0.7882\\
  0.5  &  	-0.2268 &   	-0.2040  &  	-0.1984&	-0.1951 &	-0.1939&    	-0.1930  &  	-0.1904&	-0.1886 &	-0.1868&  	-0.1859\\
   0.7  &  	-0.0469  &  	-0.0414 &   	-0.0407&	-0.0405 &	-0.0405&    	-0.0405   &	-0.0408&	-0.0411 &	-0.0417 &	-0.0420\\
  0.9 &   	-0.0043 &   	-0.0092  &  	-0.0120&	-0.0142&  	-0.0151  &  	-0.0159 &   	-0.0185&	-0.0207 &	-0.0233 &	-0.0248\\
  1.0  &  	-0.0052  &  	-0.0138  &  	-0.0179&	-0.0210 &	-0.0223   &	-0.0235  &  	-0.0271&	-0.0301&  	-0.0336 &	-0.0356\\
 1.1  &  	-0.0143&    	-0.0259  &  	-0.0311&	-0.0350 &	-0.0367   &	-0.0381 &   	-0.0426&	-0.0463 &	-0.0507&  	-0.0533\\
 1.5&    	-0.0910  &  	-0.1093&    	-0.1176&	-0.1239 &	-0.1265&    	-0.1288&    	-0.1361&	-0.1423 &	-0.1498 &	-0.1542\\
 2.0&    	-0.2143  &  	-0.2357  &  	-0.2457&	-0.2536 &	-0.2570 &   	-0.2600 &   	-0.2696&	-0.2780&  	-0.2884 &	-0.2946\\
  2.5 &   	-0.3370 &   	-0.3590  &  	-0.3697&	-0.3784&  	-0.3821&    	-0.3855 &   	-0.3966&	-0.4065 &	-0.4191&  	-0.4268\\
3.0   &	-0.4515  &  	-0.4730 &   	-0.4840&	-0.4930 &	-0.4969 &   	-0.5005  &  	-0.5124&	-0.5233&  	-0.5376&  	-0.5466\\

\hline
\end{tabular}}
\end{table}

\begin{figure}[htbp]
  \centering
 \includegraphics[width=14cm, height=5cm]{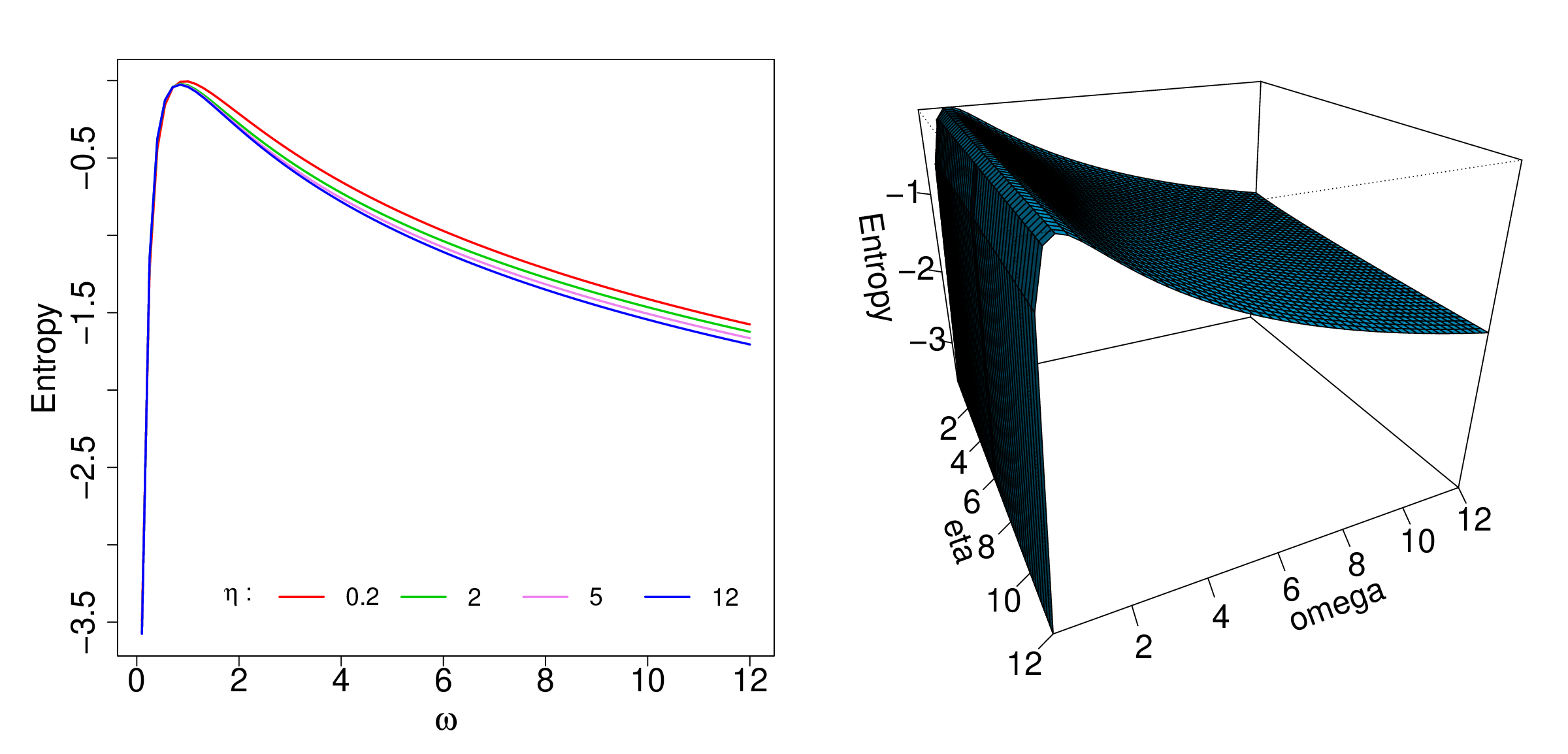}
 \caption{The entropy of UShD for $0< \omega, \eta \leq 12$, the entropy range is [-3.576, -0.003]}
 \label{fig:entropy}
\end{figure}

\section{Parameter Estimation}
In this section, we use the maximum likelihood estimation method to estimate the parameters $\omega$ and $\eta$ of the UShD.
Let $X_1,\ldots,X_n$ be a random sample from the UShD and $x_1,\ldots,x_n$ represents the values of the sample.
The likelihood and log-likelihood functions are given by, respectively
\begin{equation*}
L(\omega,\eta)
=
\prod_{i=1}^n
\frac{\omega}{\omega+3\eta}\left[\omega x_i^{\omega-1}+(2\eta-8\omega\eta\ln x_i)x_i^{2\omega-1}\right].
\end{equation*}
\begin{equation*}
\ell(\omega,\eta)=n\ln\omega-n\ln(\omega+3\eta)+\sum_{i=1}^n\ln \left[ \omega x_i^{\omega-1}+(2\eta-8\omega\eta\ln x_i)x_i^{2\omega-1}\right].
\end{equation*}
The first partial derivatives of $\ell(\omega,\eta) $ with respect to $\omega$ and $\eta$ are
\begin{equation*}
\frac{\partial \ell}{\partial \omega}=\frac{n}{\omega}-\frac{n}{\omega+3\eta}+\sum_{i=1}^n
\frac{x_i^{\omega-1}\bigl(1+\omega\ln x_i\bigr)+x_i^{2\omega-1}\left[
(2\eta-8\omega\eta\ln x_i)(2\ln x_i)-8\eta\ln x_i\right]}{\omega x_i^{\omega-1}+(2\eta-8\omega\eta\ln x_i)x_i^{2\omega-1}}.
\end{equation*}

\begin{equation*}
\frac{\partial \ell}{\partial \eta}=-\frac{3n}{\omega+3\eta}+\sum_{i=1}^n\frac{(2-8\omega\ln x_i)x_i^{2\omega-1}}{\omega x_i^{\omega-1}+(2\eta-8\omega\eta\ln x_i)x_i^{2\omega-1}}.
\end{equation*}
 The maximum likelihood estimates (MLEs) of the parameters $\omega$ and
$\eta$ are obtained by solving the system of likelihood equations
\[
\frac{\partial \ell}{\partial \omega}=0
\quad \text{and} \quad
\frac{\partial \ell}{\partial \eta}=0.
\]
However, due to the complexity of these equations, closed-form solutions
are not available. Consequently, numerical optimization methods, such as
the Limited-memory Broyden--Fletcher--Goldfarb--Shanno algorithm with box
constraints (L-BFGS-B), are employed to obtain the MLEs.

\section{Simulation Study}

In this section, we evaluate  the performance of the MLEs and the asymptotic confidence intervals of the proposed model parameters. The primary objectives of the simulation are to assess the following:
\begin{itemize}
    \item The bias of $\hat{\beta}$: $ \text{Bias}(\hat{\beta})= \frac{1}{M}\sum_{i=1}^M (\hat{\beta}_i - \beta) $
    \item The mean squared error (MSE) of $\hat{\beta}$: $ \text{MSE}(\hat{\beta}) = \frac{1}{M} \sum_{i=1}^{M} \left( \hat{\beta}_i - \beta \right)^2 $
    \item Mean Relative Error (MRE) of $\hat{\beta}$: $\text{MRE}(\hat{\beta})=\frac{1}{M}\sum_{i=1}^M \left |\frac{\hat{\beta}_i - \beta}{\beta} \right | $
    \item The coverage probability (CP) for $\beta$:  $\text{CP}_{\beta}=\frac{1}{M}\sum_{i=1}^{M}I\left(\beta \in CI_i(\beta)\right) $
       \item The convergence rate (CR), defined as the proportion of simulation runs in which the optimization algorithm successfully converged,
\[
\text{CR} = \frac{\text{Number of successful estimation runs}}{\text{Total number of simulation runs}}
\]
 where $ \hat{\beta}_i $ is the MLE of $\beta$ (either $\omega$ or $\eta$) in the $ i$-th simulation run,  $i=1, \dots, M $, $CI_i(\beta)$ is the confidence interval of $\beta$ from the $ i$-th sample, and $I(\cdot)$  is the indicator function, i.e.,  $I(\alpha)=1$  if  $\alpha$ is true, 0 otherwise.
\end{itemize}
\noindent The simulation procedure is as follows (using the R software):
\begin{enumerate}
    \item Generate independent random samples of size $n = 30, 60, 100, 150$, and $200$ from the UShD using a rejection sampling algorithm based on the derived PDF.
    \item For each generated sample, estimate the parameters $(\omega, \eta)$ by maximizing the likelihood function using the L-BFGS-B optimization method with appropriate lower and upper bounds.
    \item For each estimated parameter, construct bootstrap-based confidence intervals with $B = 100$ resamples to approximate the sampling distribution and calculate the coverage probability.
    \item Repeat the above steps for $M = 1000$ simulation runs to obtain reliable estimates of Bias, MSE, MRE, CP, and CR.
\end{enumerate}
\noindent The simulation results are presented in Table~\ref{T:sim2}, showing the performance metrics for different sample sizes. The simulation study provides insight into the finite-sample behavior of the MLEs, illustrating how estimation accuracy and confidence interval coverage improve with increasing sample size.

\begin{table}[!ht]
\centering
\caption{Simulation results.} \label{T:sim2}
\adjustbox{max width=\textwidth, center}{\small
\begin{tabular}{l c c  c  c c  c c c c c c}\hline

    $\omega$& $\eta$& $n $&Bias$(\hat{\omega})$  & Bias$(\hat{\eta})$    &MSE$(\hat{\omega})$  &MSE$(\hat{\eta})$  & MRE$(\hat{\omega})$  &MRE$(\hat{\eta})$ &$\text{CP}_{\omega}$ &$\text{CP}_{\eta}$    &CR       \\ \hline

0.6  & 0.2  & 30 & 0.0305 & 0.0889 & 0.0139 & 0.0590 & 0.3019 & 0.7618 & 0.9456 & 0.9379 &0.972 \\
     &      & 60 & 0.0230 & 0.0840 & 0.0081 & 0.0535 & 0.2285 & 0.7066 & 0.9474 & 0.9395 &0.969 \\
     &      & 100 & 0.0185 & 0.0389 & 0.0045 & 0.0285 & 0.1707 & 0.5170 & 0.9419 & 0.9498 &0.973 \\
     &      & 150 & 0.0145 & 0.0303 & 0.0033 & 0.0262 & 0.1507 & 0.4784 & 0.9529 & 0.9476 &0.969 \\
     &      & 200 & 0.0124 & 0.0254 & 0.0024 & 0.0224 & 0.1288 & 0.4487 & 0.9498 & 0.9480 &0.978 \\ \hline

0.6  & 1.8 & 30 & 0.0401 & 0.1502 & 0.0144 & 0.7765 & 0.0993 & 0.4454 & 0.9465 & 0.9404 &0.976 \\
     &      & 60 & 0.0290 & 0.1157 & 0.0076 & 0.6970 & 0.0749 & 0.4080 & 0.9454 & 0.9428 &0.973 \\
     &      & 100 & 0.0170 & 0.0239 & 0.0041 & 0.1814 & 0.0548 & 0.1409 & 0.9415 & 0.9457 &0.983 \\
     &      & 150 & 0.0127 & 0.0190 & 0.0026 & 0.1068 & 0.0435 & 0.1013 & 0.9549 & 0.9485 &0.975 \\
     &      & 200 & 0.0114 & 0.0081 & 0.0019 & 0.0454 & 0.0377 & 0.0871 & 0.9545 & 0.9465 &0.980 \\ \hline

1    & 0.7  & 30 & 0.0582 & 0.2135 & 0.0424 & 0.5668 & 0.1882 & 0.7689 & 0.9418 & 0.9438 &0.975 \\
     &      & 60 & 0.0349 & 0.1939 & 0.0201 & 0.5018 & 0.1347 & 0.6930 & 0.9491 & 0.9436 &0.977 \\
     &      & 100 & 0.0260 & 0.0689 & 0.0125 & 0.2747 & 0.1060 & 0.5931 & 0.9531 & 0.9379 &0.962 \\
     &      & 150 & 0.0182 & 0.0505 & 0.0078 & 0.1354 & 0.0846 & 0.4375 & 0.9574 & 0.9418 &0.959 \\
     &      & 200 & 0.0144 & 0.0017 & 0.0067 & 0.0973 & 0.0768 & 0.3305 & 0.9474 & 0.9455 &0.967 \\    \hline

1.2  & 0.8  & 30 & 0.0686 & 0.0679 & 0.0572 & 0.3763 & 0.1948 & 0.6159 & 0.9309 & 0.9350 &0.974 \\
     &      & 60 & 0.0439 & 0.0397 & 0.0280 & 0.3584 & 0.1366 & 0.5947 & 0.9452 & 0.9426 &0.987 \\
     &      & 100 & 0.0264 & 0.0110 & 0.0155 & 0.2556 & 0.1000 & 0.4914 & 0.9483 & 0.9439 &0.973 \\
     &      & 150 & 0.0206 & 0.0103 & 0.0108 & 0.1594 & 0.0846 & 0.3589 & 0.9498 & 0.9427 &0.996 \\
     &      & 200 & 0.0165 & 0.0098 & 0.0095 & 0.0763 & 0.0804 & 0.2952 & 0.9469 & 0.9498 &0.985 \\ \hline

1.5  & 0.2  & 30 & 0.0305 & 0.0802 & 0.0805 & 0.0598 & 0.6346 & 0.6568 & 0.9471 & 0.9430 &0.988 \\
     &      & 60 & 0.0264 & 0.0486 & 0.0435 & 0.0519 & 0.4689 & 0.6072 & 0.9478 & 0.9407 &0.983 \\
     &      & 100 & 0.0200 & 0.0072 & 0.0265 & 0.0305 & 0.3663 & 0.4623 & 0.9504 & 0.9574 &0.992 \\
     &      & 150 & 0.0141 & 0.0049 & 0.0179 & 0.0275 & 0.2933 & 0.4337 & 0.9473 & 0.9564 &0.991 \\
     &      & 200 & 0.0091 & 0.0017 & 0.0116 & 0.0153 & 0.1599 & 0.2855 & 0.9476 & 0.9507 &0.995 \\ \hline

2    &  0.6 & 30 & 0.0851 & 0.0803 & 0.1461 & 0.2972 & 0.3322 & 0.5627 & 0.9396 & 0.9366 &0.996 \\
     &      & 60 & 0.0446 & 0.0788 & 0.0767 & 0.2329 & 0.2319 & 0.5338 & 0.9589 & 0.9548 &0.998 \\
     &      & 100 & 0.0390 & 0.0351 & 0.0474 & 0.1952 & 0.1872 & 0.4601 & 0.9476 & 0.9406 &0.995 \\
     &      & 150 & 0.0180 & 0.0527 & 0.0321 & 0.1205 & 0.1589 & 0.4183 & 0.9469 & 0.9368 &0.998 \\
     &      & 200 & 0.0072 & 0.0065 & 0.0142 & 0.0261 & 0.1244 & 0.3298 & 0.9579 & 0.9469 &0.998 \\ \hline

2    & 1.4  & 30 & 0.0666 & 0.0669 & 0.1237 & 0.6743 & 0.1700 & 0.5361 & 0.9493 & 0.9438 &0.991 \\
     &      & 60 & 0.0562 & 0.0610 & 0.0711 & 0.5084 & 0.1264 & 0.5098 & 0.9457 & 0.9486 &0.996 \\
     &      & 100 & 0.0472 & 0.0539 & 0.0480 & 0.3296 & 0.1043 & 0.4539 & 0.9600 & 0.9530 &1.000 \\
     &      & 150 & 0.0209 & 0.0170 & 0.0301 & 0.0692 & 0.0819 & 0.3410 & 0.9648 & 0.9557 &0.997 \\
     &      & 200 & 0.0080 & 0.0062 & 0.0247 & 0.0495 & 0.0720 & 0.2020 & 0.9509 & 0.9539 &0.999 \\ \hline

\end{tabular}}
\end{table}

\section{Applications}
This section aims to illustrate the effectiveness and flexibility of the USh model through its application to real-life data sets and its comparison with several alternative models. The goodness-of-fit of the proposed model is evaluated against seven competing models presented in Table ~\ref{T:dists}. Four real-life data sets are examined by fitting both the USh model and its competitors, and their performances are assessed using multiple information criteria, including the AIC, AICC, BIC, and HQIC. In addition, the Kolmogorov–Smirnov (KS) test is employed to measure the agreement between the empirical distribution and the fitted theoretical distributions, with the corresponding p-values used for model comparison. This test is widely adopted for evaluating how well a theoretical distribution represents observed data. In general, the model that yields the smallest values of the information criteria and the largest p-value is regarded as providing the best fit.

\begin{table}[!ht]
\centering
\caption{Some competitive models for the unit Shiha distribution } \label{T:dists}
\adjustbox{max width=\textwidth, center}{ \small
\renewcommand{\arraystretch}{1.5}
\begin{tabular}{l  l c } \hline

    Distribution    & Probability density function & Reference       \\ \hline

  Kumaraswamy (Kw) & $f_{Kw}(x) = \omega \eta x^{\eta - 1} (1 - x^{\eta})^{\omega - 1}\; , \quad \omega >0, \eta > 0 $ & \cite{kum} \\
  Two-parameter Unit Bilal (UB) & $f_{UB}(x)=\frac{6\omega}{\eta}\left(1 - x^{\omega / \eta}\right) x^{2\omega / \eta - 1}  ,\quad  \omega, \eta > 0$ & \cite{bilal} \\
  Unit Exponential (UE) & $f_{UE}(x) = \frac{2\omega\eta}{1-x^2} \left( \frac{1+x}{1-x} \right)^{\eta} \exp\left[ \omega \left( 1 - \left( \frac{1 + x}{1 - x} \right)^{\eta} \right) \right]  , \quad  \omega, \eta > 0$ & \cite{hasan}\\
   Exponentiated UEHL (EUEHL) & $f_{EUEHL}(x) = 2\omega\eta\alpha\frac{x^{\omega-1}}{\left(1+x^{\omega}\right)^{2}}\left( \frac{1-x^{\omega}}{1+x^{\omega}}\right)^{\eta-1} \left[ 1 - \left( \frac{1-x^{\omega}}{1+x^{\omega}} \right)^{\eta} \right]^{\alpha-1} \; , \quad \omega > 0,  \eta > 0, \; \alpha > 0$&\cite{genc} \\
  Unit Exponentiated Lomax (UEL) & $ f_{UEL}(x) = \frac{\eta \omega \alpha}{x} (1 - \eta \ln(x))^{-\omega-1} \left\{ 1 - (1 - \eta \ln(x))^{-\omega} \right\}^{\alpha-1}  , \quad \omega > 0, \; \eta > 0, \; \alpha > 0$ & \cite{fayomi}\\
  Beta & $ f_{Beta}(x) = \frac{\Gamma(\omega+\eta)}{\Gamma(\omega)\Gamma(\eta)}x^{\omega-1}(1-x)^{\eta-1}, \quad \omega, \eta > 0$ & \cite{beta} \\
  Topp Leone  (TL) & $f_{TL}(x) = 2\omega x^{\omega - 1} \left( 1 - x \right) \left( 2 - x \right)^{\omega - 1}  , \quad  \omega >0 $& \cite{topp}\\ \hline
\end{tabular}}
\end{table}
The data sets considered in this study have been obtained from different sources and have been widely used in the literature. To ensure consistency and suitability for modeling unit probability distributions, we selected data sets that are either naturally defined on the unit interval (0, 1) or can be appropriately transformed to this range. Some of the data sets correspond to capacity factors, which are defined as the ratio of actual electricity output to the maximum possible output of a power unit and therefore lie in the interval (0, 1). These data sets are commonly used as reliability measures for evaluating the efficiency of energy generation systems and have been analyzed using different estimation algorithms. In addition, the study includes lifetime and reliability data, such as medical survival times and Type I censored failure-time data, which are transformed to the unit interval for modeling purposes.
A detailed description of each data set is provided below.

\noindent \textbf{Data I}: The following data set consists of the lifetimes (in days) of 43 blood cancer patients from one of the Ministry of Health hospitals in Saudi Arabia \cite{abo}:

\noindent 115, 181, 255, 418, 441, 461, 516, 739, 743, 789, 807, 865, 924, 983, 1025, 1062, 1063, 1165, 1191, 1222, 1222, 1251, 1277, 1290, 1357, 1369, 1408, 1455, 1478, 1519, 1578, 1578, 1599, 1603, 1605, 1696, 1735, 1799, 1815 ,1852, 1899, 1925, 1965. Dividing each observation by 1970 yields data values between 0 and 1.

\noindent The following data sets II and III were reported by \cite{caram} and have been studied recently by many authors, for example \cite{eliwa, alz}. These data provide a comparison of two different algorithms, namely SC16 and P3, for estimating unit capacity factors.

\noindent \textbf{Data II}: The observations produced by the SC16 algorithm are:

\noindent 0.853, 0.759, 0.866, 0.809, 0.717, 0.544, 0.492, 0.403, 0.344, 0.213, 0.116, 0.116, 0.092,
0.070, 0.059, 0.048, 0.036, 0.029, 0.021, 0.014, 0.011, 0.008, 0.006.

\noindent \textbf{Data III}: The values obtained from the P3 algorithm are:

\noindent 0.853, 0.759, 0.874, 0.800, 0.716, 0.557, 0.503, 0.399, 0.334, 0.207, 0.118,
0.118, 0.097, 0.078, 0.067, 0.056, 0.044, 0.036, 0.026, 0.019, 0.014, 0.010.

\noindent \textbf{Data IV}: The type I censored data set, discussed by \cite{murt}, consists of 50 items that were tested, with the test terminated after 12 hours. The failure times observed before the termination of the test are as follows:

 \noindent 0.80, 1.26, 1.29, 1.85, 2.41, 2.47, 2.76, 3.35, 3.68, 4.46, 4.65, 4.83, 5.21, 5.26, 5.36, 5.39, 5.53, 5.64, 5.80, 6.08, 6.38, 7.02, 7.18, 7.60, 8.13,8.46, 8.69, 10.52, 11.25, 11.90. Dividing each observation by 12 yields data values between 0 and 1.

\noindent Table~\ref{tab:desc} presents some descriptive statistics of the data. Figures~\ref{fig:ttt1}--\ref{fig:ttt4} display the TTT, box, and violin plots of the four data sets.

\begin{table}[!ht]
\centering
\caption{Some descriptive statistics for the four data sets}
\label{tab:desc}
\begin{tabular}{c c c c c c c c c c}
\hline
Data& Min & Q1 & Median & Mean  & Q3 & Max  & Variance & SK & KU \\ \hline
I&0.058 & 0.424 & 0.635 & 0.605 & 0.806 & 0.998 & 0.066 & -0.442 & 5.279 \\
II&0.006 & 0.032 & 0.116 & 0.288 & 0.518 & 0.866 & 0.101 & 0.768 & 4.974 \\
III&0.01 & 0.047 & 0.118 & 0.304 & 0.544 & 0.874 & 0.101 & 0.711 & 4.884 \\
IV&0.067 & 0.286 & 0.448 & 0.459 & 0.595 & 0.992 & 0.058 & 0.388 & 5.645 \\ \hline
\end{tabular}
\end{table}
\begin{figure}[H]
  \centering
 \includegraphics[width=14.5cm, height=4cm]{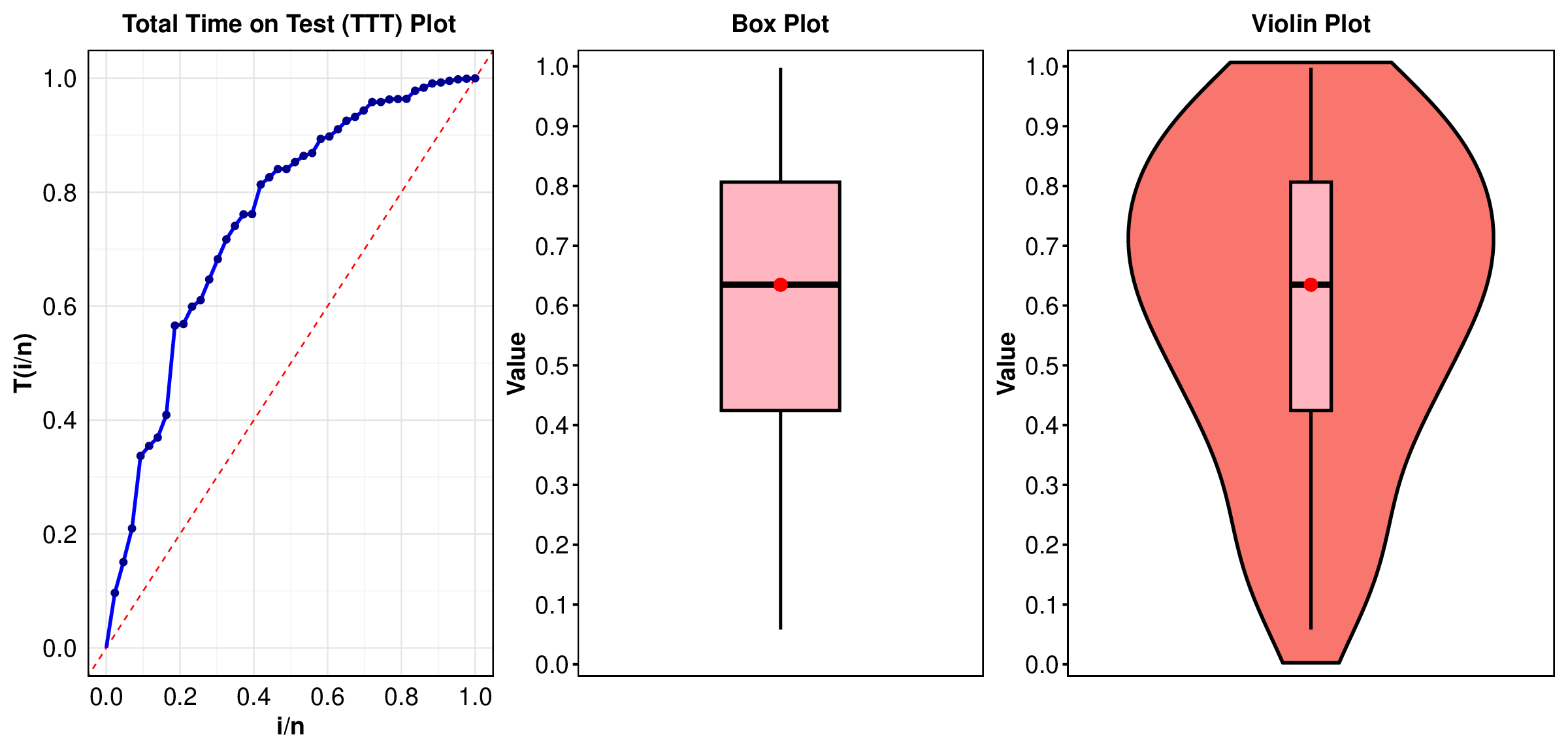}
\caption{TTT, box, and violin plots for data I.}
 \label{fig:ttt1}
\end{figure}
\begin{figure}[H]
  \centering
 \includegraphics[width=14.5cm, height=4cm]{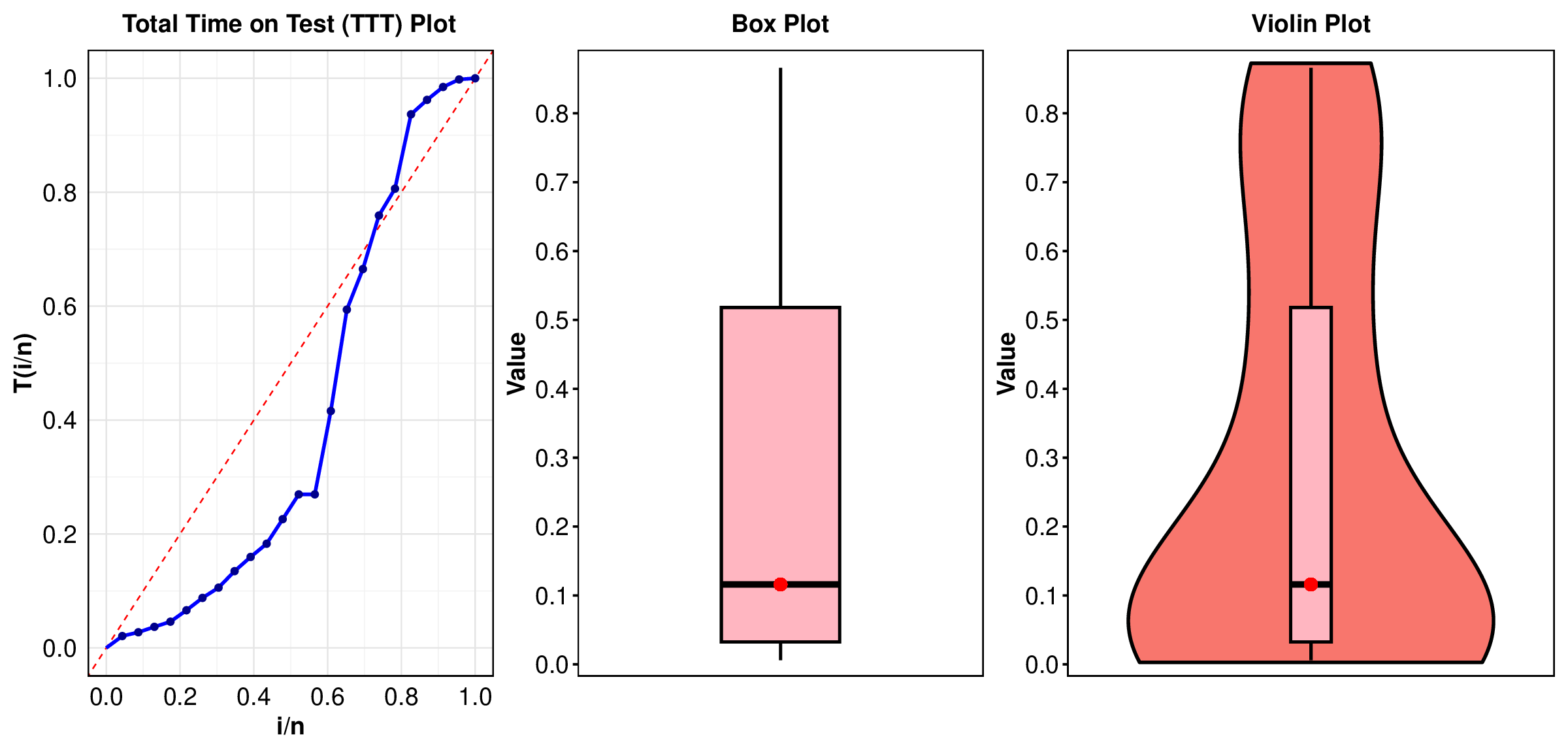}
\caption{TTT, box, and violin plots for data II.}
 \label{fig:ttt2}
\end{figure}
\begin{figure}[H]
  \centering
 \includegraphics[width=14.5cm, height=4cm]{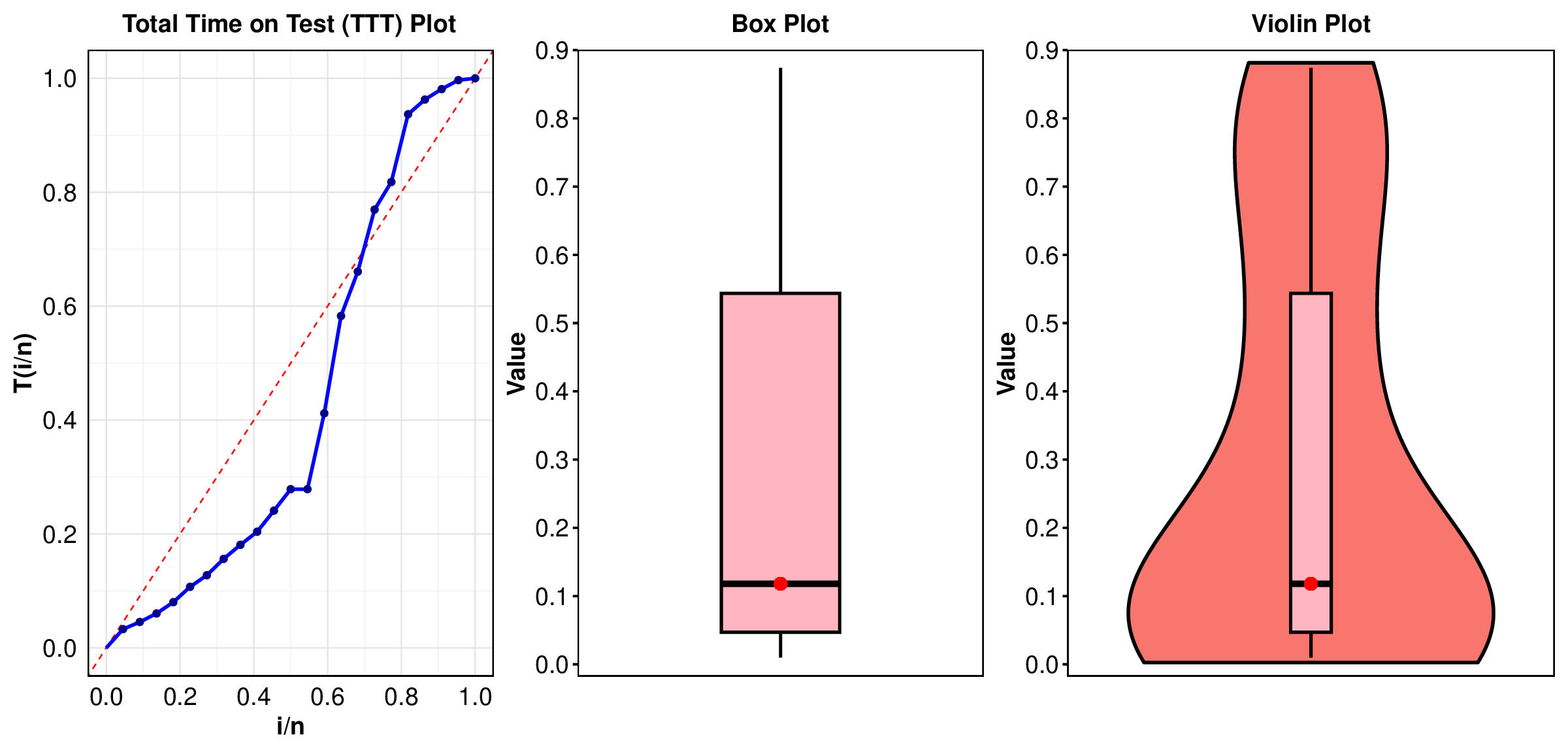}
\caption{TTT, box, and violin plots for data III.}
 \label{fig:ttt3}
\end{figure}
\begin{figure}[H]
  \centering
 \includegraphics[width=14.5cm, height=4cm]{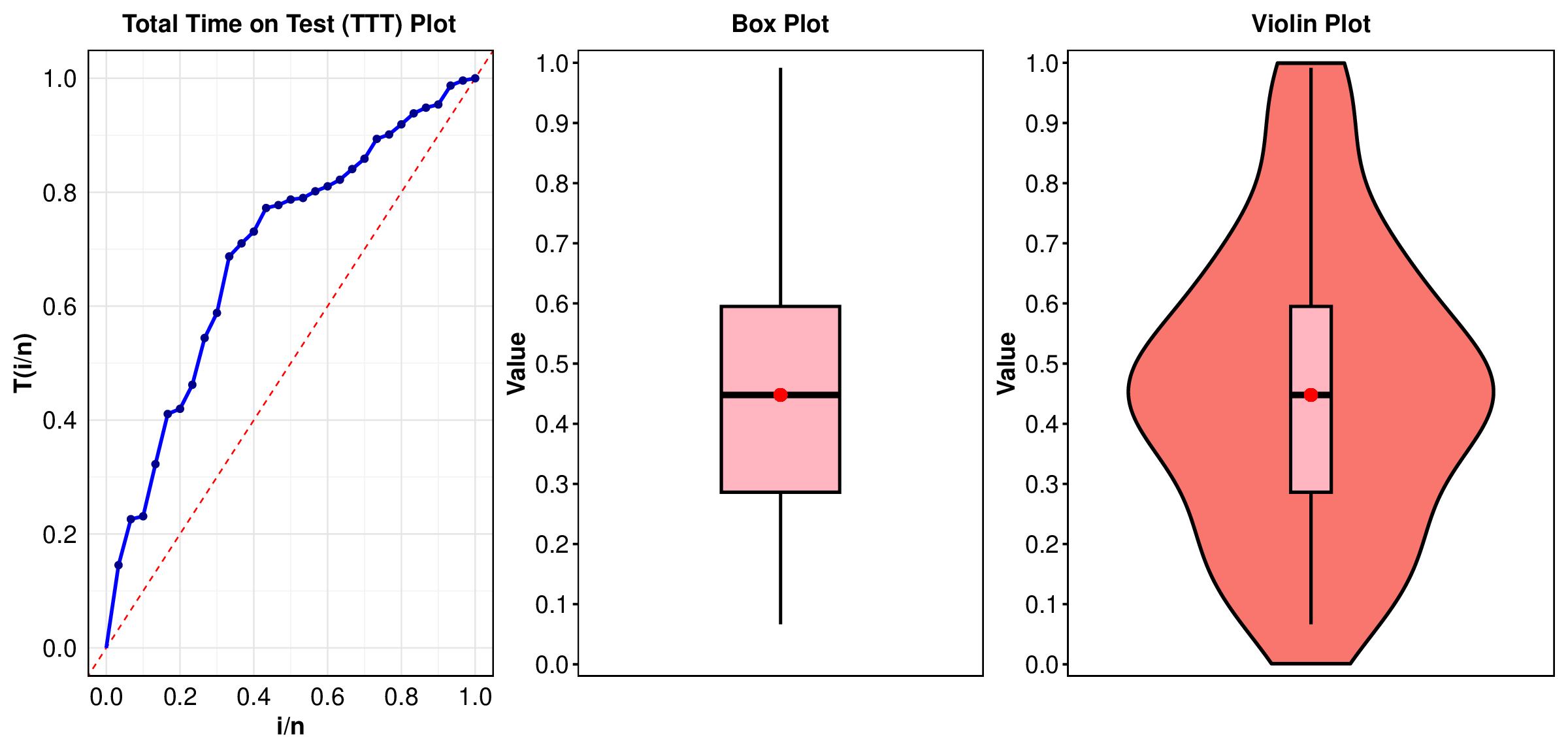}
\caption{TTT, box, and violin plots for data IV.}
 \label{fig:ttt4}
\end{figure}

\noindent Based on the descriptive statistics reported in Table~\ref{tab:desc} and the corresponding box and violin plots, it can be observed that data set I exhibits a left-skewed distribution, whereas data sets II, III, and IV are right-skewed. Furthermore, the TTT plots for data sets I and IV display a concave shape, indicating an increasing failure (hazard) rate. In contrast, the TTT plots for data sets II and III are initially convex and subsequently concave, suggesting a bathtub-shaped failure rate characterized by an initial decrease followed by an increase. Consequently, the USh distribution appears to be a suitable and flexible model for fitting these data sets.

\noindent Tables~\ref{T:app1}--\ref{T:app4} present the maximum likelihood estimates (MLEs) and goodness-of-fit information criteria for the four data sets across all competing models. The reported measures consistently indicate that the proposed USh distribution outperforms the alternative models, providing the best fit in most cases. These numerical results are further supported by graphical diagnostics. Specifically, by fitting each model to each data set, we assess their ability to capture the underlying distributional patterns and behaviors across different types of data. The histograms of the data sets overlaid with the fitted probability density functions, together with the empirical and estimated cumulative distribution functions, demonstrate the adequacy of the USh model. Moreover, the corresponding PP and QQ plots provide additional validation, where the close alignment of the USh-based points with the reference lines indicates a superior agreement with the observed data compared to the competing models. Overall, these results highlight the effectiveness and flexibility of the USh distribution, confirming its suitability for modeling diverse unit-valued data sets. See Figures~\ref{fig:fit1}--\ref{fig:qq4}.

\begin{table}[!ht]
\centering
\caption{MLEs and goodness-of-fit measures for Data I.}\label{T:app1}
\adjustbox{max width=\textwidth, center}{
\begin{tabular}{|l| c c c  c c c c c   c|}\hline
Distribution & $\hat{\omega}$ &$\hat{\eta}$ &$\hat{\alpha} $ & AIC  & AICC & BIC &HQIC   & KS & p-value \\ \hline
USh & 1.4957& 0.1221&-- & \textbf{-3.1059} & \textbf{-2.8059} & \textbf{0.4165} & \textbf{-1.8069}  & \textbf{0.0726} & \textbf{0.9772} \\
Kw & 1.0177& 1.559&-- & -3.0158 & -2.7158 & 0.5066 & -1.7169  & 0.0813 & 0.9390 \\
UB & 1.2819& 0.9874&-- & 11.4709 & 11.7709 & 14.9933 & 12.7699  & 0.1457 & 0.3205 \\
UE & 1.7603& 0.2293&-- & 2.5036 & 2.8036 & 6.0260 & 3.8026  & 0.1317 & 0.4447 \\
EUEHL & 0.075& 1.0012& 40.3039 & -1.0279 & -0.4125 & 4.2557 & 0.9205 & 0.0799 & 0.9465 \\
UEL & 54.3925& 0.0294& 1.0319 & -0.9877 & -0.3724 & 4.2959 & 0.9607  & 0.0819 & 0.9350 \\
Beta & 1.5613& 1.0173&-- & -3.0159 & -2.7159 & 0.5065 & -1.7170  & 0.0812 & 0.9393 \\
TL & 3.0122&--&-- & 15.9571 & 16.0546 & 17.7183 & 16.6065  & 0.1877 & 0.0965 \\ \hline
\end{tabular}}
\end{table}
\begin{table}[!ht]
\centering
\caption{MLEs and goodness-of-fit measures for Data II.}\label{T:app2}
\adjustbox{max width=\textwidth, center}{
\begin{tabular}{|l| c c c  c c c c c   c|}\hline
Distribution & $\hat{\omega}$ &$\hat{\eta}$ &$\hat{\alpha} $ & AIC  & AICC & BIC &HQIC  & KS & p-value \\ \hline
USh & 0.374& 193.1469&-- & \textbf{-15.7077} & \textbf{-15.1077} & \textbf{-13.4367} & \textbf{-15.1366}  & \textbf{0.1512} & \textbf{0.6694} \\
Kw & 1.1862& 0.5044&-- & -15.3416 & -14.7416 & -13.0706 & -14.7704  & 0.1790 & 0.4529 \\
UB & 2.0102& 5.3038&-- & -11.2158 & -10.6158 & -8.9448 & -10.6446  & 0.1786 & 0.4553 \\
UE & 107.6512& 0.0126&-- & -9.0515 & -8.4515 & -6.7805 & -8.4804  & 0.2941 & 0.0374 \\
EUEHL & 0.3132& 1.0549& 2.4651 & -14.6813 & -13.4181 & -11.2748 & -13.8246  & 0.1584 & 0.6112 \\
UEL & 52.1005& 0.0099& 1.2034 & -13.1600 & -11.8968 & -9.7535 & -12.3033  & 0.1808 & 0.4400 \\
Beta & 0.4869& 1.1679&-- & -15.2149 & -14.6149 & -12.9439 & -14.6438  & 0.1836 & 0.4202 \\
TL & 0.5943& -- &-- & -14.2302 & -14.0398 & -13.0948 & -13.9447 & 0.1690 & 0.5272 \\ \hline
\end{tabular}}
\end{table}
\begin{table}[!ht]
\centering
\caption{MLEs and goodness-of-fit measures for Data III.}\label{T:app3}
\adjustbox{max width=\textwidth, center}{
\begin{tabular}{|l| c c c  c c c c c  c|}\hline
Distribution & $\hat{\omega}$ &$\hat{\eta}$ &$\hat{\alpha} $ & AIC  & AICC & BIC &HQIC  & KS & p-value \\ \hline
USh & 0.4159& 1000&-- & \textbf{-10.1225} & \textbf{-9.4910 }& \textbf{-7.9405} & \textbf{-9.6085} & \textbf{0.1760 }& \textbf{0.5029} \\
Kw & 1.2305& 0.5718&-- & -9.6872 & -9.0557 & -7.5051 & -9.1732  & 0.1963 & 0.3650 \\
UB & 1.7316& 4.1212&-- & -6.3672 & -5.7357 & -4.1852 & -5.8532  & 0.1827 & 0.4546 \\
UE & 93.4921& 0.0137&-- & -6.7942 & -6.1626 & -4.6121 & -6.2802  & 0.2824 & 0.0598 \\
EUEHL & 0.34& 1.0911& 2.6393 & -9.0181 & -7.6847 & -5.7449 & -8.2470 & 0.1788 & 0.4828 \\
UEL & 54.3238& 0.0107& 1.2387 & -7.5177 & -6.1844 & -4.2446 & -6.7467  & 0.1968 & 0.3616 \\
Beta & 0.554& 1.2198&-- & -9.5638 & -8.9323 & -7.3818 & -9.0498  & 0.2002 & 0.3413 \\
TL & 0.6778&-- &-- & -8.9965 & -8.7965 & -7.9055 & -8.7395  & 0.1848 & 0.4401 \\ \hline

\end{tabular}}
\end{table}

\begin{table}[!ht]
\centering
\caption{MLEs and goodness-of-fit measures for Data IV.}\label{T:app4}
\adjustbox{max width=\textwidth, center}{
\begin{tabular}{|l| c c c  c c c  c c  c|}\hline
Distribution & $\hat{\omega}$ &$\hat{\eta}$ &$\hat{\alpha} $ & AIC  & AICC & BIC &HQIC   & KS & p-value \\ \hline
USh & 0.8854& 545.452&-- & \textbf{-0.2354} & \textbf{0.2090} & 2.5670 & \textbf{0.6611}  & 0.1296 & 0.6478 \\
Kw & 1.4896 &1.3281&-- & 1.5721 & 2.0166 & 4.3745 & 2.4686 & 0.1304 & 0.6401 \\
UB & 0.9848& 1.1287&-- & 2.9225 & 3.3670 & 5.7249 & 3.8190  & \textbf{0.0881} & \textbf{0.9581} \\
UE & 7.8993& 0.0945&-- & 4.6855 & 5.1300 & 7.4879 & 5.5820  & 0.1540 & 0.4318 \\
EUEHL & 1.2418& 1.2371& 1.4679 & 1.4424 & 2.3655 & 5.6460 & 2.7872  & 0.1020 & 0.8830 \\
UEL & 91.3279& 0.0148& 1.501 & 3.7091 & 4.6322 & 7.9127 & 5.0539  & 0.1316 & 0.6287 \\
Beta & 1.3856& 1.5036& & 1.4478 & 1.8922 & 4.2502 & 2.3443  & 0.1288 & 0.6548 \\
TL & 1.8705 & --  &  --  & 1.0529 & 1.1957 & \textbf{2.4541} & 1.5011  & 0.0909 & 0.9461 \\ \hline

\end{tabular}}
\end{table}

\begin{figure}[H]
  \centering
 \includegraphics[width=14cm, height=4.5cm]{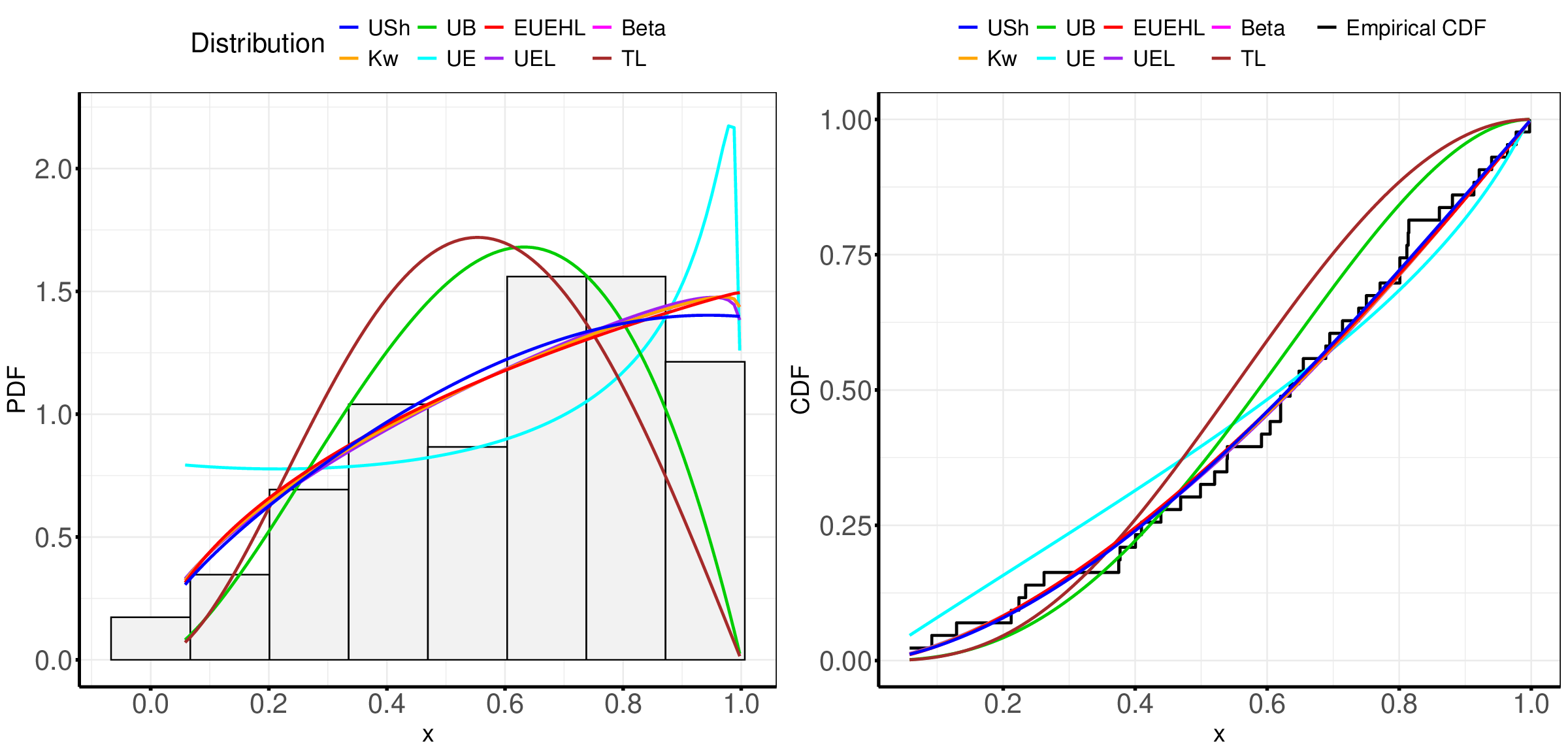}
  \caption{Fitted PDF and CDF of the competing models for Data I.}
 \label{fig:fit1}
\end{figure}
\begin{figure}[H]
  \centering
 \includegraphics[width=14cm, height=7cm]{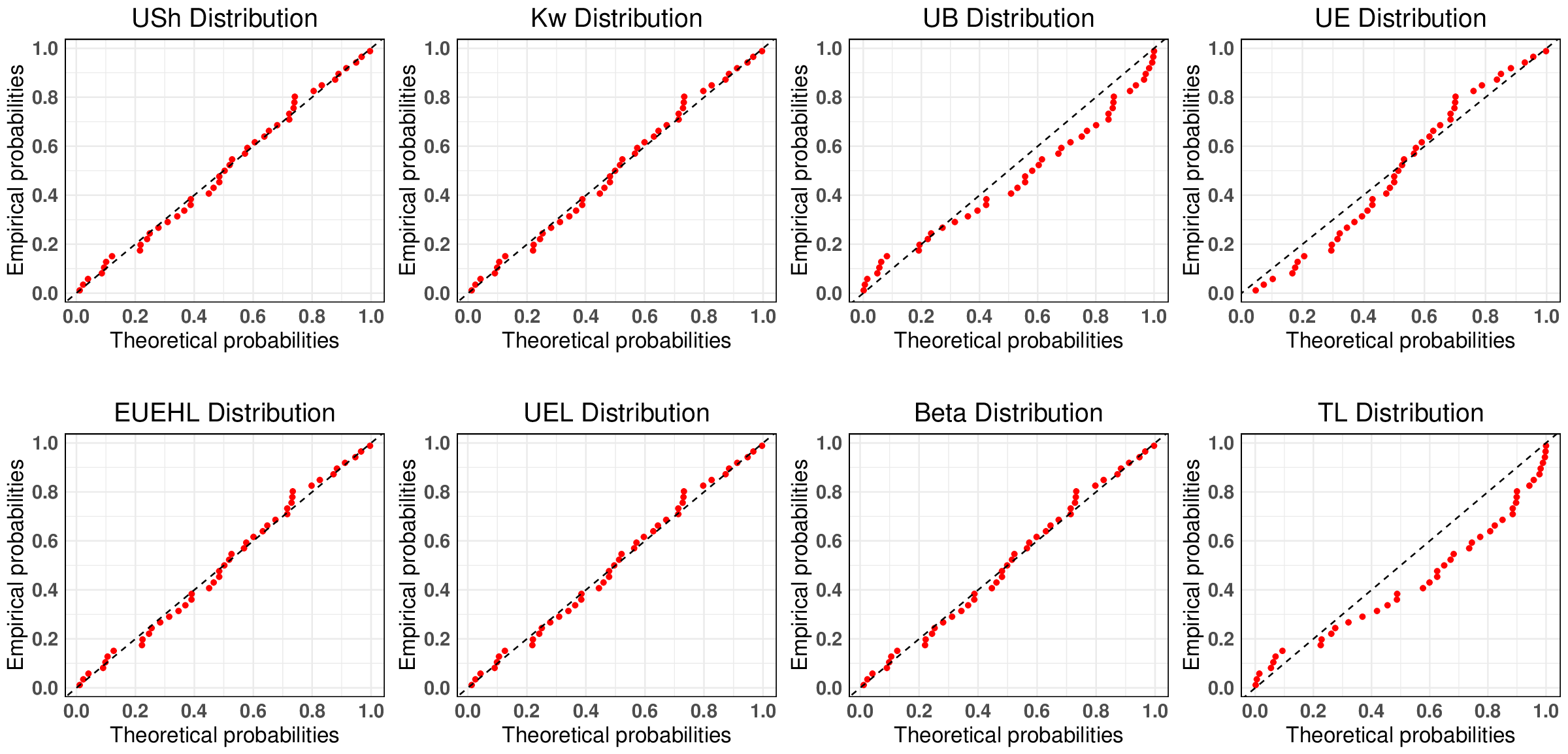}
\caption{Probability–probability (PP) plots of the competing models for Data I.}
 \label{fig:pp1}
\end{figure}
\begin{figure}[H]
  \centering
 \includegraphics[width=14cm, height=7cm]{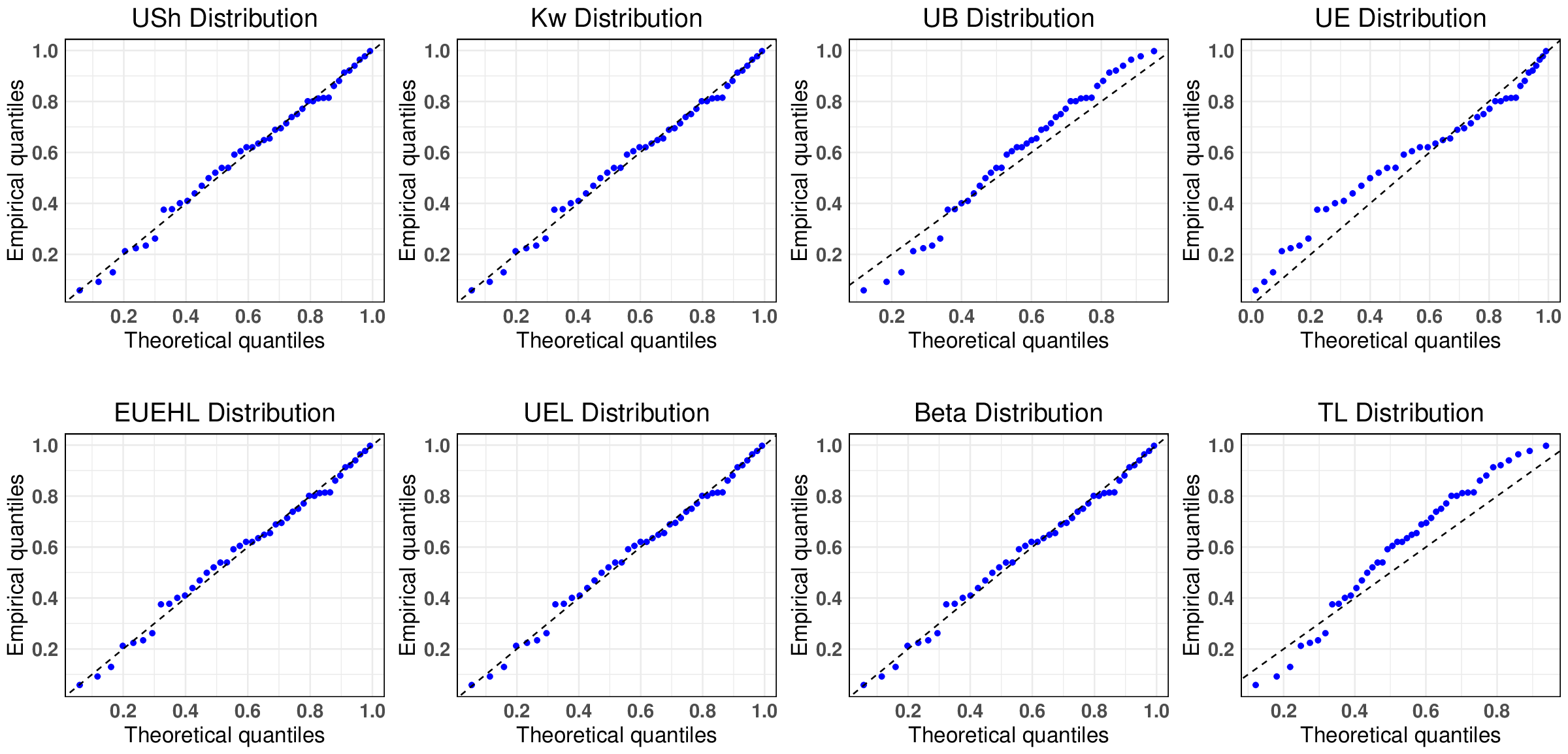}
  \caption{Quantile–quantile (QQ) plots of the competing models for Data I.}
 \label{fig:qq1}
\end{figure}

\begin{figure}[H]
  \centering
 \includegraphics[width=14cm, height=4.5cm]{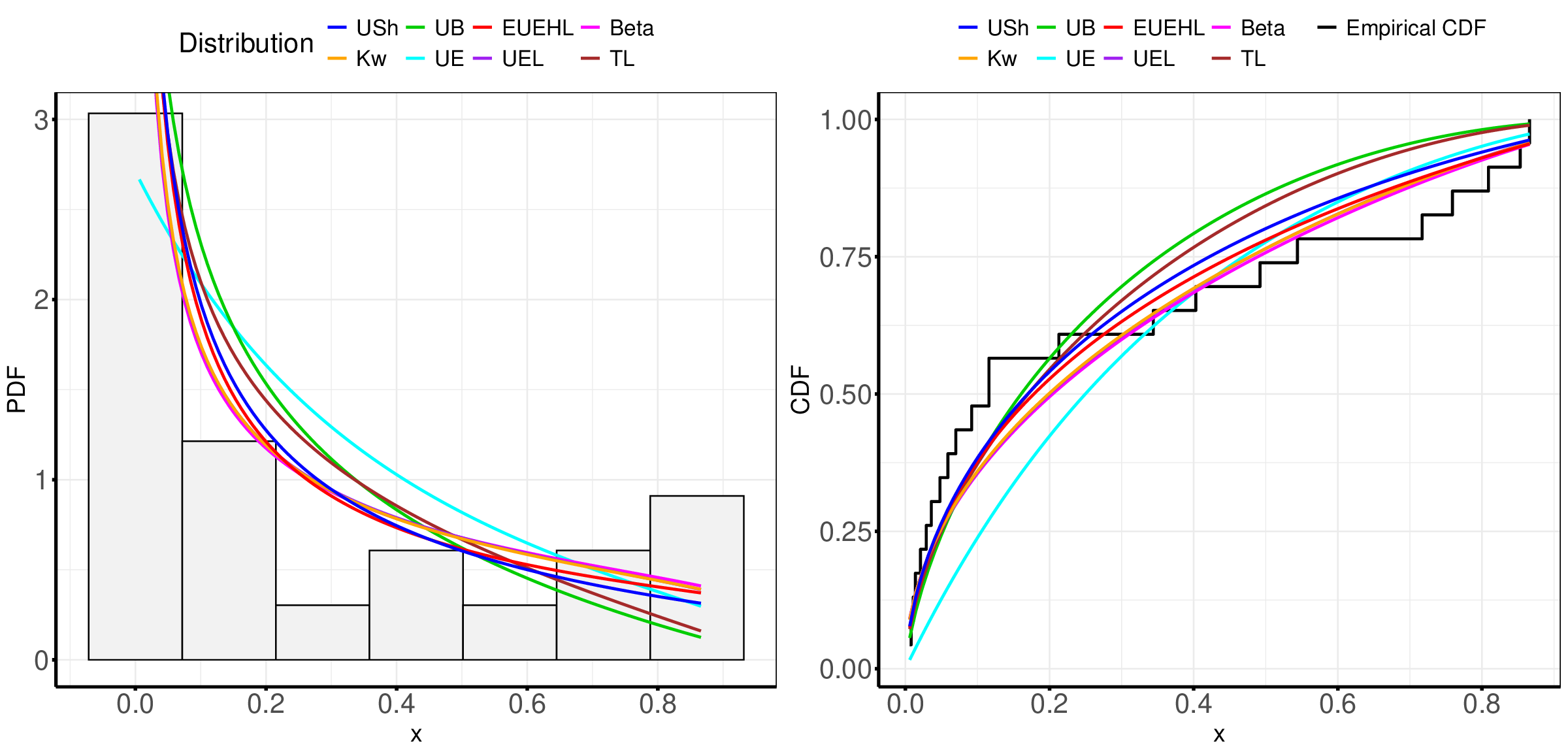}
  \caption{Fitted PDF and CDF of the competing models for Data II.}
 \label{fig:fit2}
\end{figure}
\begin{figure}[H]
  \centering
 \includegraphics[width=14cm, height=7cm]{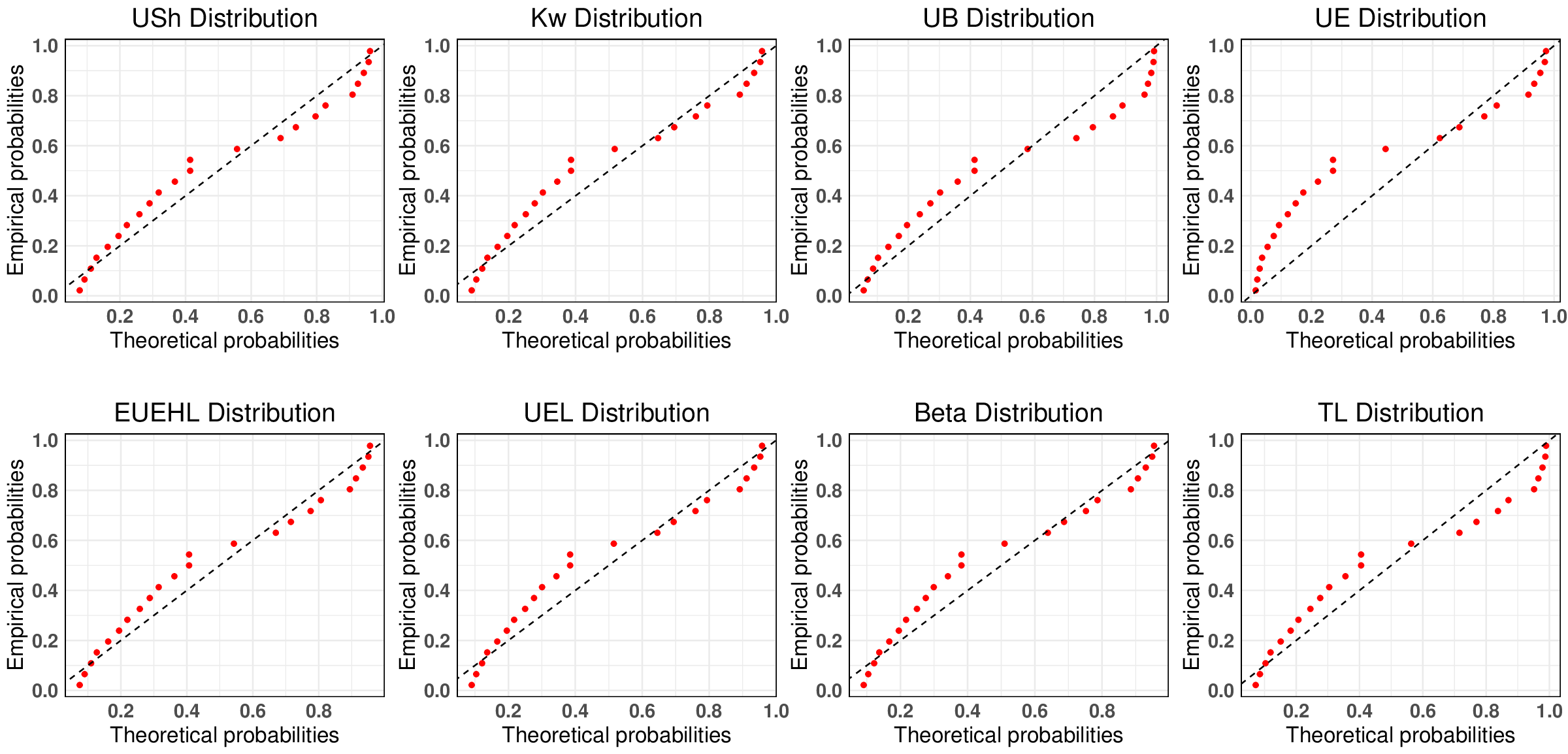}
\caption{Probability–probability (PP) plots of the competing models for Data II.}
 \label{fig:pp2}
\end{figure}
\begin{figure}[H]
  \centering
 \includegraphics[width=14cm, height=7cm]{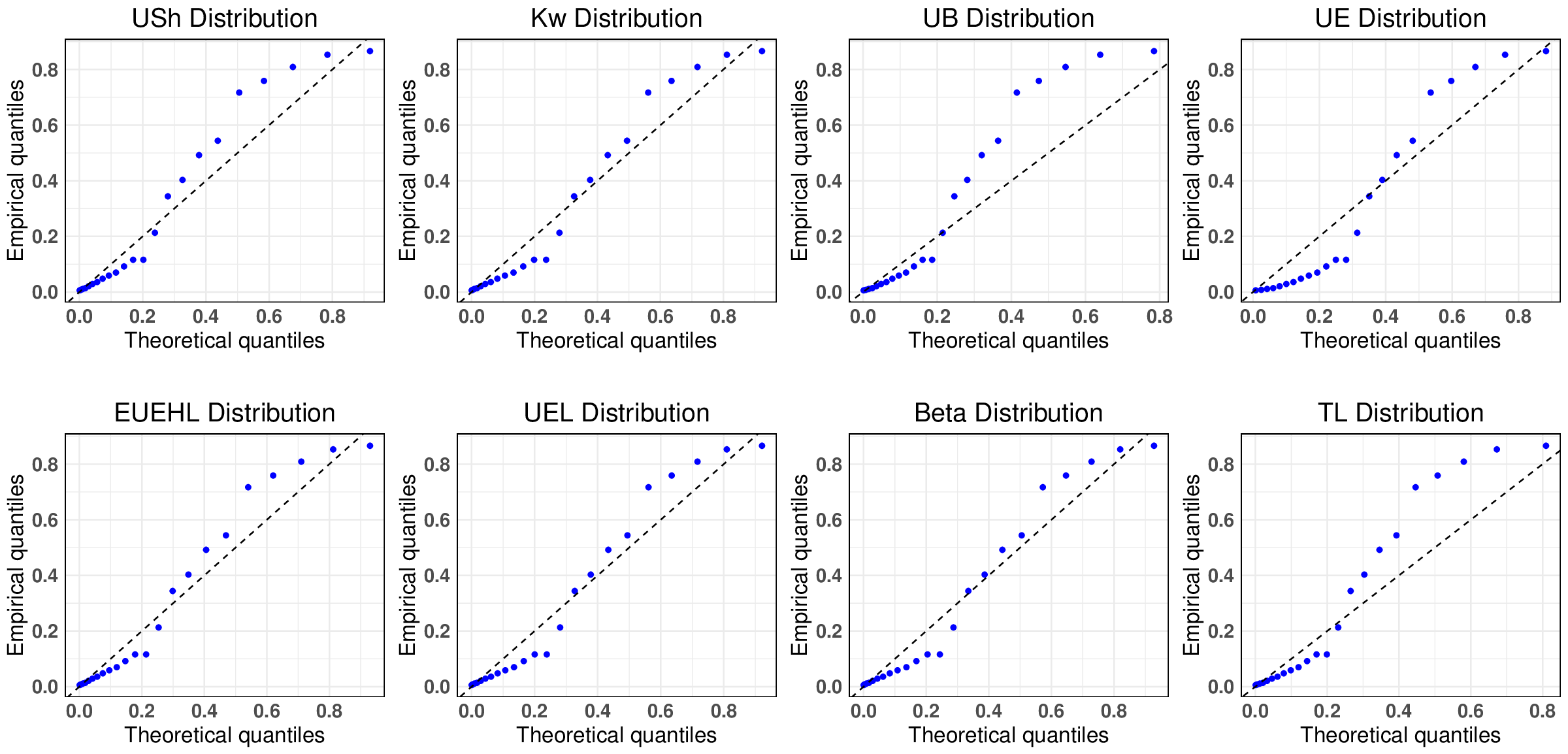}
  \caption{Quantile–quantile (QQ) plots of the competing models for Data II.}
 \label{fig:qq2}
\end{figure}

\begin{figure}[H]
  \centering
 \includegraphics[width=14cm, height=4.5cm]{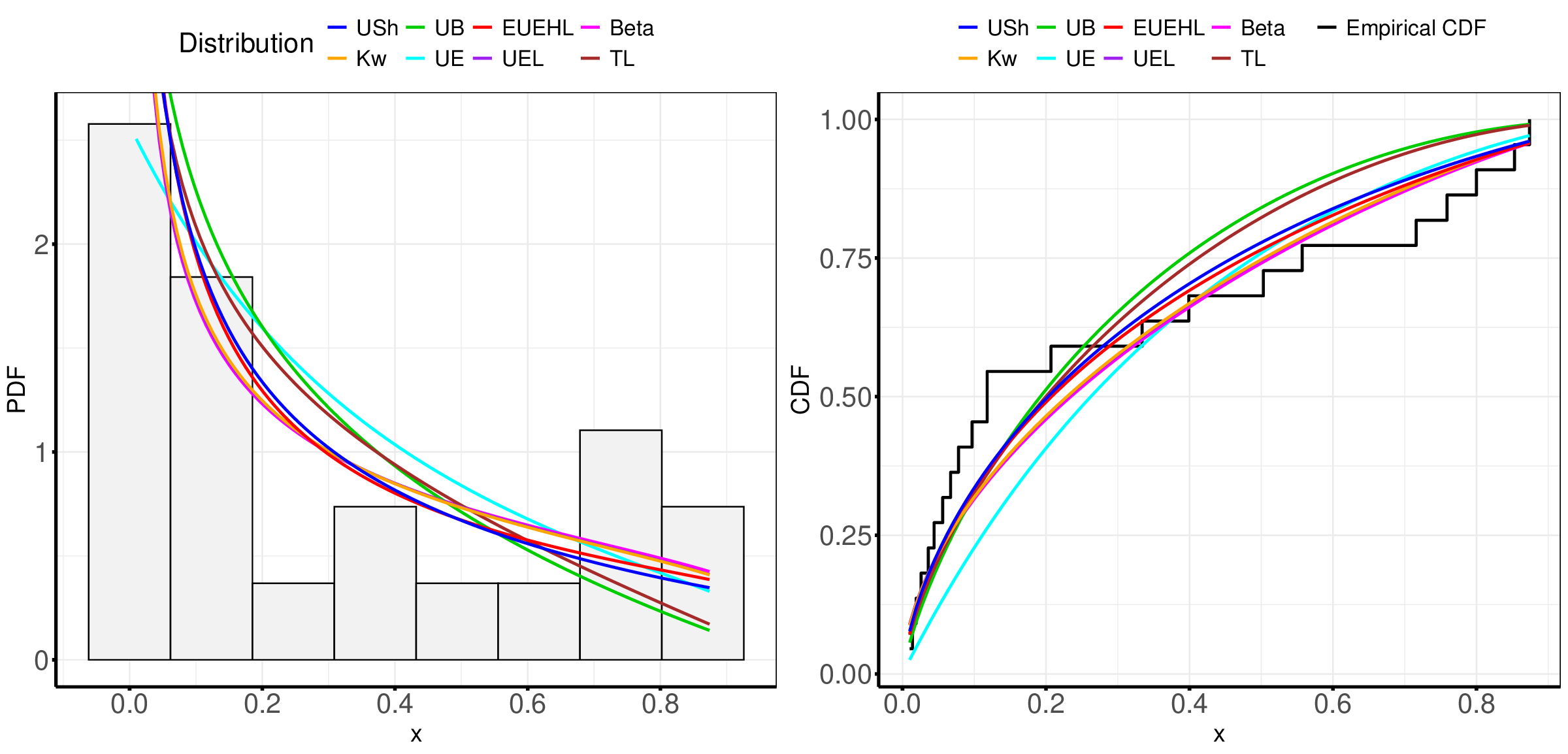}
  \caption{Fitted PDF and CDF of the competing models for Data III.}
 \label{fig:fit3}
\end{figure}
\begin{figure}[H]
  \centering
 \includegraphics[width=14cm, height=7cm]{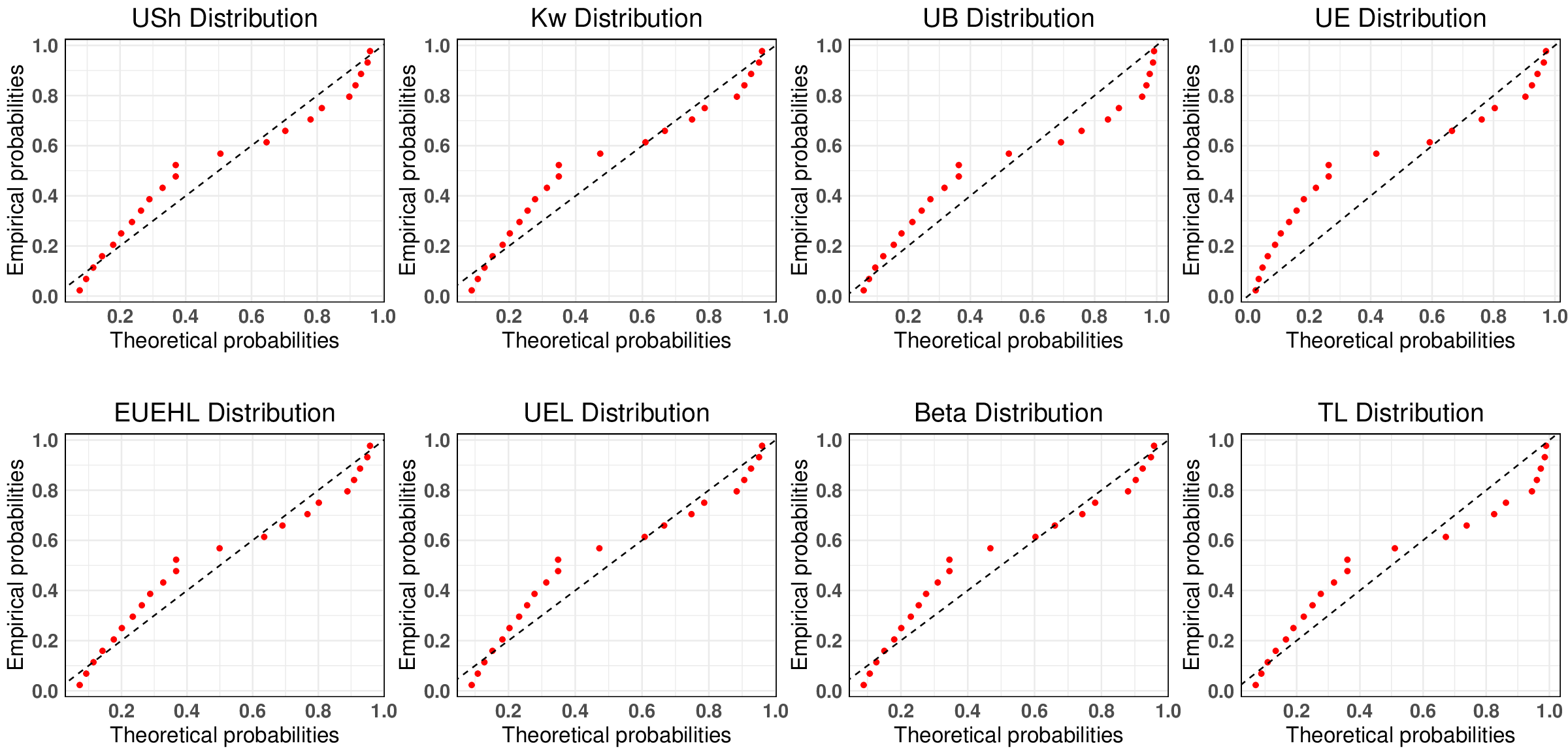}
\caption{ Probability–probability (PP) plots of the competing models for Data III.}
 \label{fig:pp3}
\end{figure}
\begin{figure}[H]
  \centering
 \includegraphics[width=14cm, height=7cm]{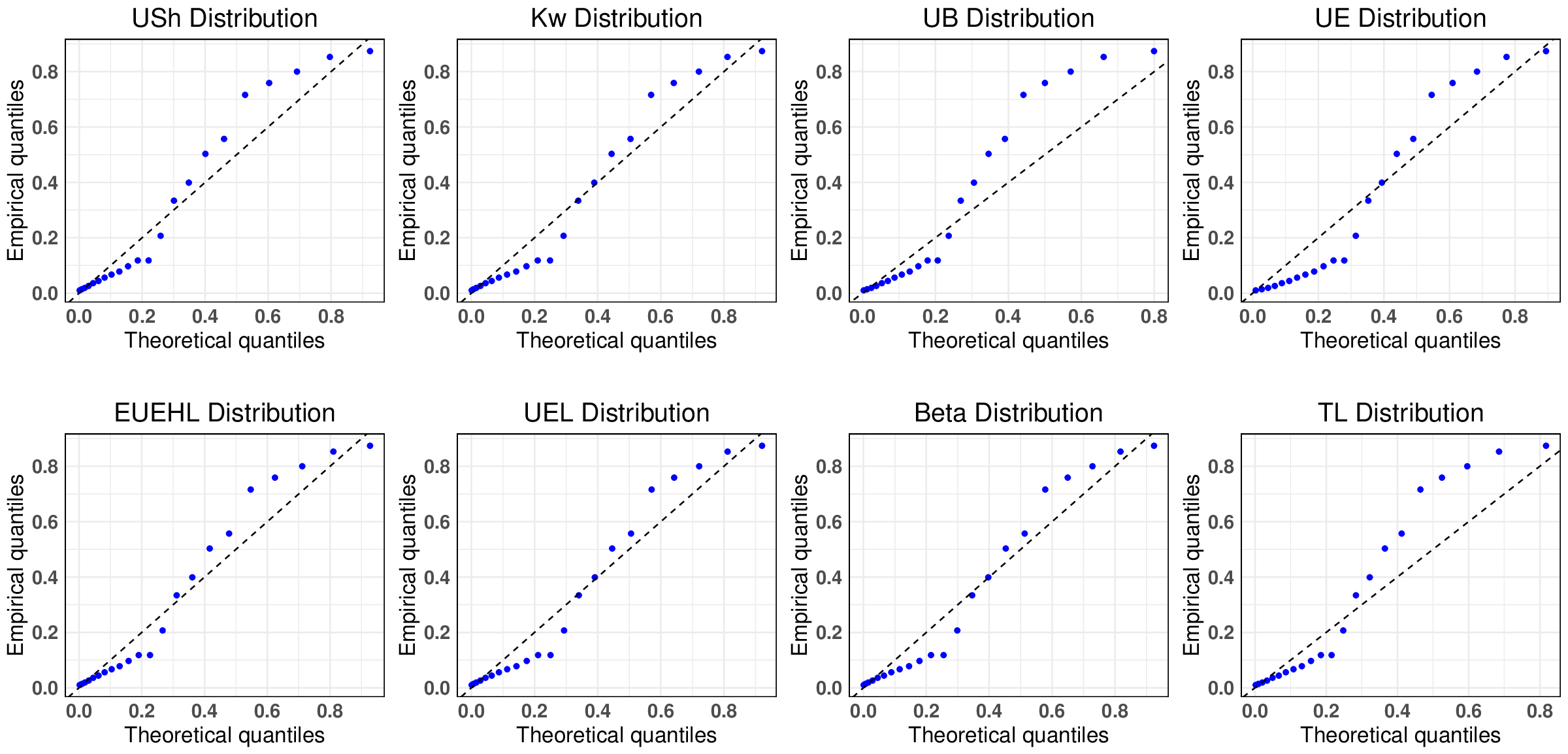}
  \caption{Quantile–quantile (QQ) plots of the competing models for Data III.}
 \label{fig:qq3}
\end{figure}

\begin{figure}[H]
  \centering
 \includegraphics[width=14cm, height=4.5cm]{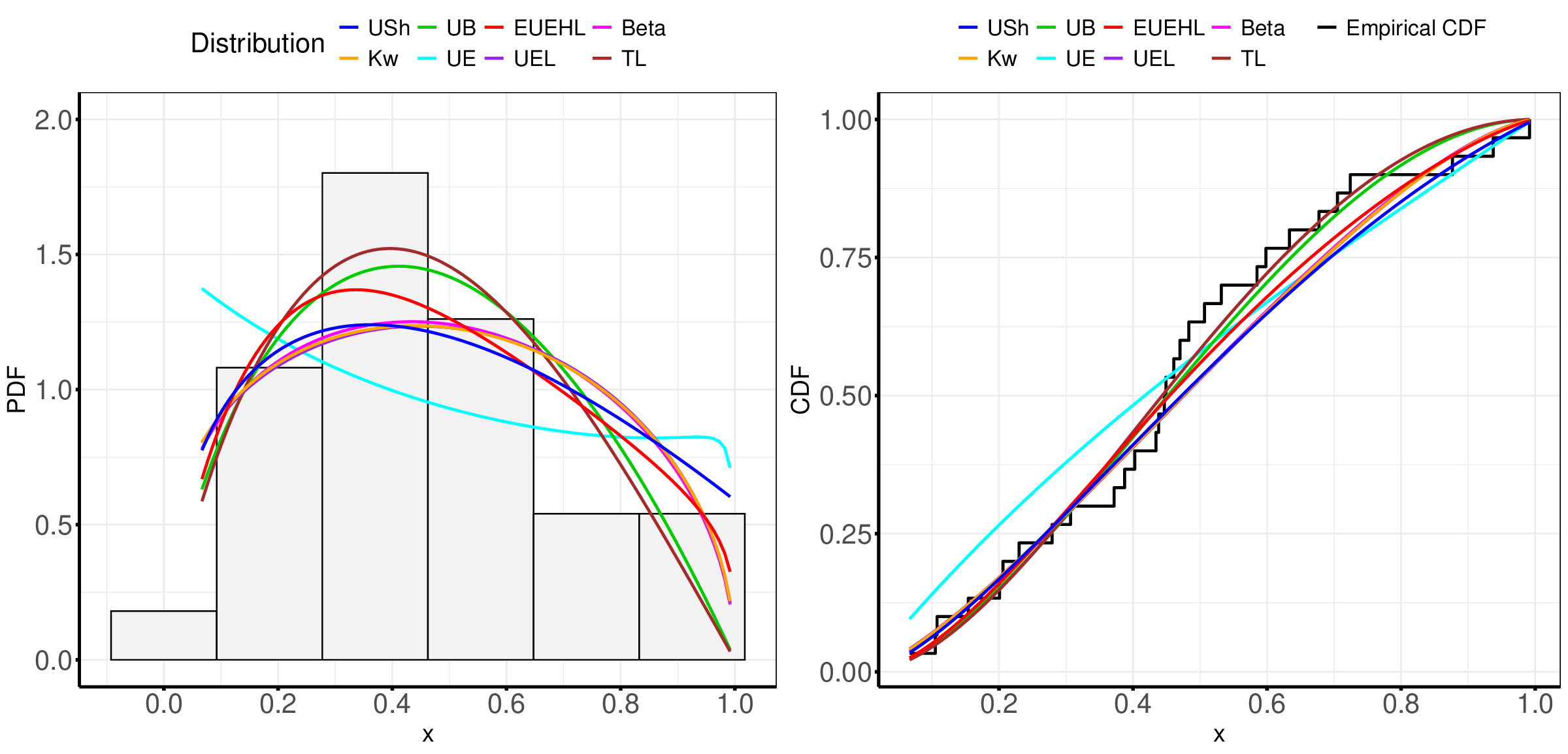}
  \caption{ Fitted PDF and CDF of the competing models for Data IV.}
 \label{fig:fit4}
\end{figure}
\begin{figure}[H]
  \centering
 \includegraphics[width=14cm, height=7cm]{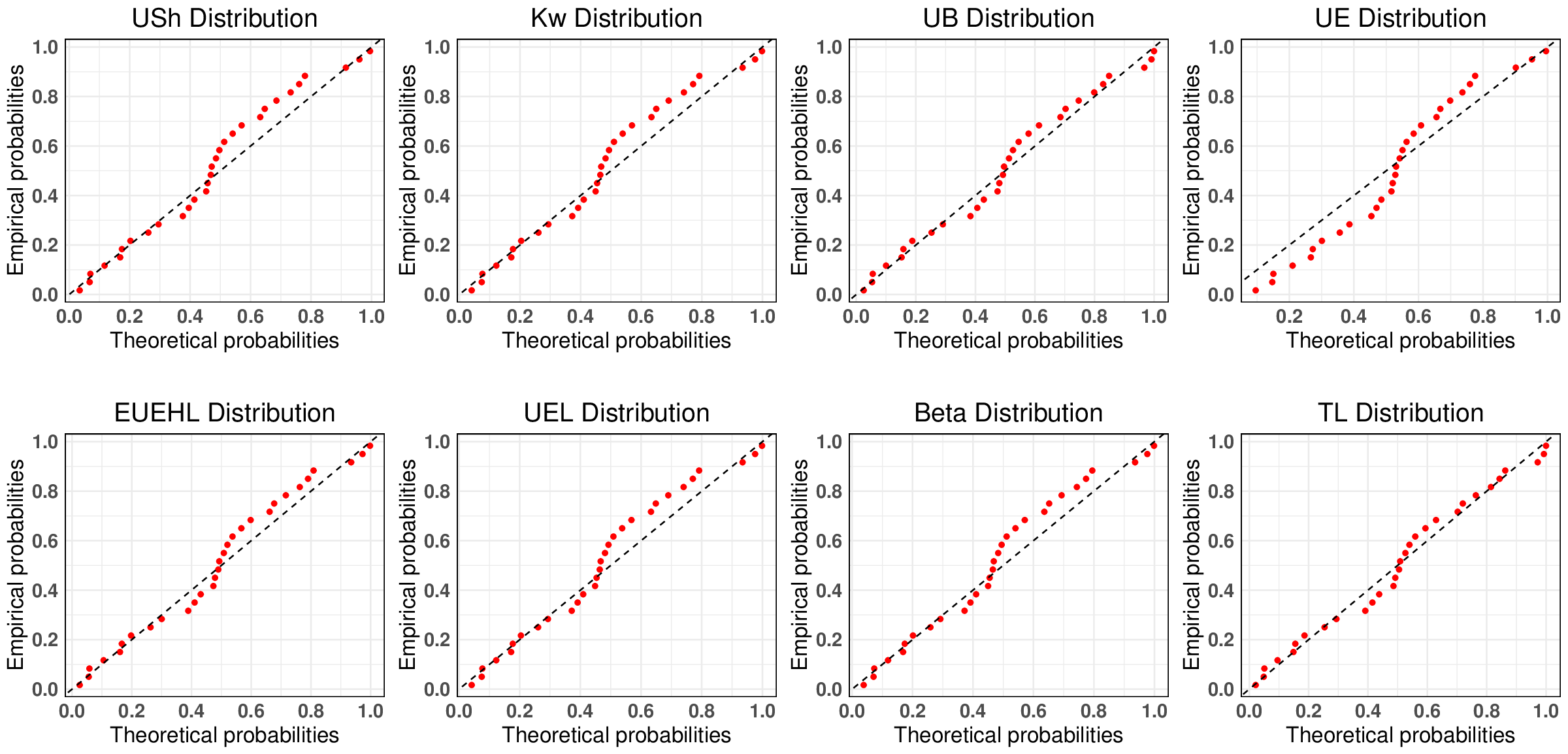}
\caption{ Probability–probability (PP) plots of the competing models for Data IV.}
 \label{fig:pp4}
\end{figure}
\begin{figure}[H]
  \centering
 \includegraphics[width=14cm, height=7cm]{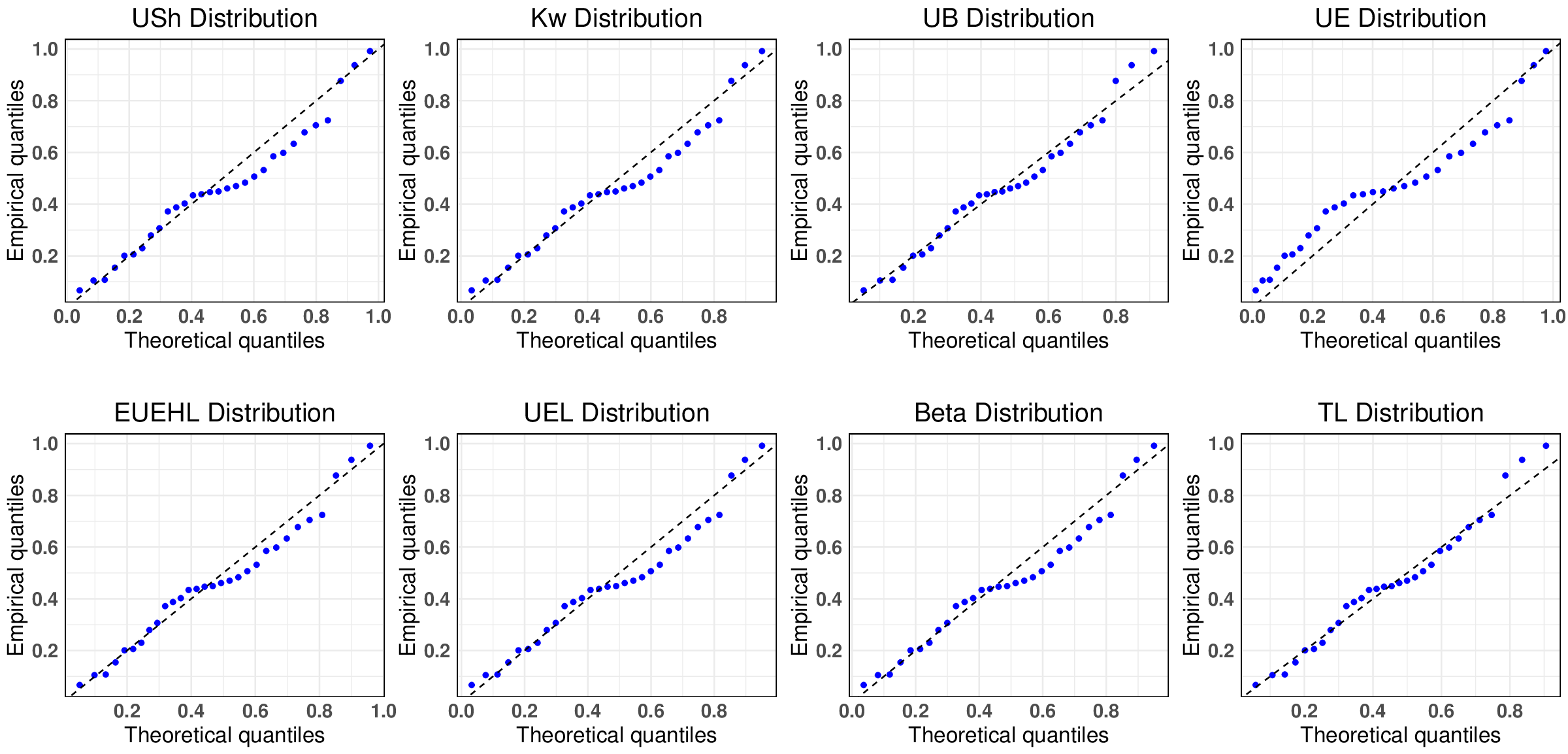}
  \caption{Quantile–quantile (QQ) plots of the competing models for Data set IV.
}
 \label{fig:qq4}
\end{figure}

\section{Conclusions}
This paper proposes a new unit distribution, termed the unit Shiha (USh) distribution, which is obtained through an inverse exponential transformation of the Sh distribution. Several statistical properties of the proposed model were derived, including moments, the quantile function, entropy, and stress–strength reliability. Owing to the flexibility of its probability density and hazard rate functions, the USh distribution is capable of modeling both left- and right-skewed data and representing different failure-rate behaviors, such as increasing and bathtub-shaped patterns.
MLE was employed for parameter inference, and a  simulation study was conducted to assess the finite-sample performance of the estimators. The simulation results demonstrate satisfactory estimation accuracy, good convergence behavior, and adequate coverage probabilities for the associated confidence intervals, with clear improvements as the sample size increases.
The practical applicability of the proposed model was illustrated using four real-life data sets and compared with several well-known competing unit distributions. Both the goodness-of-fit criteria and graphical diagnostics indicate that, in most cases, the USh distribution provides a superior fit to the considered data sets. Overall, the proposed USh distribution offers a flexible and effective modeling framework for unit-valued data and represents a valuable addition to the family of unit probability distributions.



\section*{Conflict of interest}
The authors declare that there is no conflict of interests regarding the publication of this paper.

\end{document}